\documentclass[a4paper,nosubfigcap,USenglish,thm-restate]{lipics-v2021}

\nolinenumbers
\hideLIPIcs


\usepackage[]{todonotes}
\usepackage{enumerate}
\usepackage{mathtools}
\usepackage{thmtools} 
\usepackage{thm-restate}

\usepackage{cleveref}
\crefformat{footnote}{#2\footnotemark[#1]#3}

\usepackage{tabu}
\usepackage{booktabs}

\usepackage{wrapfig}

\graphicspath{{./figures/}}

\bibliographystyle{plainurl}


\usepackage{xspace}
\newcommand{\dirN}{\textsc{n}\xspace}
\newcommand{\dirE}{\textsc{e}\xspace}
\newcommand{\dirS}{\textsc{s}\xspace}
\newcommand{\dirW}{\textsc{w}\xspace}
\newcommand{\dirNE}{\textsc{ne}\xspace}

\newcommand{\dirNW}{\textsc{nw}\xspace}
\newcommand{\dirWN}{\textsc{wn}\xspace}
\newcommand{\dirSE}{\textsc{se}\xspace}

\newcommand{\dirSW}{\textsc{sw}\xspace}
\newcommand{\dirWS}{\textsc{ws}\xspace}

\newcommand{\diam}{\raisebox{0pt}{{O}}}

\title{Compacting Squares:
  Input-Sensitive In-Place
  Reconfiguration of Sliding Squares}

\author{Hugo A. Akitaya}{University of Massachusetts Lowell, USA}{hugo_akitaya@uml.edu}{https://orcid.org/0000-0002-6827-2200}{}
\author{Erik D.\ Demaine}{Massachusetts Institute of Technology, USA}{edemaine@mit.edu}{https://orcid.org/0000-0003-3803-5703}{}
\author{Matias Korman}{Siemens Electronic Design Automation, USA}{matias_korman@mentor.com}{}{}
\author{Irina Kostitsyna}{TU Eindhoven, The Netherlands}{i.kostitsyna@tue.nl}{https://orcid.org/0000-0003-0544-2257}{}
\author{Irene Parada}{Technical University of Denmark, Denmark}{irmde@dtu.dk}{https://orcid.org/0000-0003-2401-8670}{}
\author{Willem Sonke}{TU Eindhoven, The Netherlands}{w.m.sonke@tue.nl}{https://orcid.org/0000-0001-9553-7385}{}
\author{Bettina Speckmann}{TU Eindhoven, The Netherlands}{b.speckmann@tue.nl}{https://orcid.org/0000-0002-8514-7858}{}
\author{Ryuhei Uehara}{JAIST, Japan}{uehara@jaist.ac.jp}{https://orcid.org/0000-0003-0895-3765}{}
\author{Jules Wulms}{TU Wien, Austria}{jwulms@ac.tuwien.ac.at}{https://orcid.org/0000-0002-9314-8260}{}

\authorrunning{H. Akitaya et al.}

\Copyright{Hugo Akitaya, Erik D. Demaine, Matias Korman, Irina Kostitsyna, Irene Parada, Willem Sonke, Bettina Speckmann, Ryuhei Uehara, and Jules Wulms}

\ccsdesc[100]{Theory of computation~Computational geometry}



\keywords{Sliding cubes, Reconfiguration, Modular robots, NP-hardness}
    


\begin{document}

\maketitle

\begin{abstract}
%
%
A well-established theoretical model for modular robots in two dimensions are edge-connected configurations of square modules, which can reconfigure through so-called sliding moves.
Dumitrescu and Pach [Graphs and Combinatorics, 2006] proved that it is always possible to reconfigure one edge-connected configuration of~$n$ squares into any other using at most $O(n^2)$ sliding moves, while keeping the configuration connected at all times.
%

  For certain pairs of configurations, reconfiguration may require $\Omega(n^2)$ sliding moves.
  However, significantly fewer moves may be sufficient. We prove that it is NP-hard to minimize the 
  number of sliding moves for a given pair of edge-connected configurations.
  On the positive side we present {Gather\&Compact}, an input-sensitive in-place algorithm 
  that requires only $O(\bar{P} n)$ sliding moves to transform one
  configuration into the other,
  where $\bar{P}$ is the maximum perimeter of the two bounding boxes.
  The squares move within the bounding boxes only, with the exception of at most one square at a time which may move through the positions adjacent to the bounding boxes.
  The $O(\bar{P} n)$ bound never exceeds $O(n^2)$, and is optimal (up to
  constant factors) among all bounds parameterized by just $n$ and~$\bar{P}$.
  Our algorithm is built on the basic principle that
  well-connected components of modular robots can be transformed efficiently.
  Hence we iteratively increase the connectivity within a configuration, to finally arrive at a single solid $xy$-monotone component. 

  We implemented {Gather\&Compact} and compared it experimentally to 
  the in-place modification by Moreno and Sacrist{\'a}n [EuroCG 2020] of the Dumitrescu and Pach algorithm (MSDP).
  Our experiments show that {Gather\&Compact} consistently outperforms MSDP
  by a significant margin, on all types of square configurations.
\end{abstract}

\newpage

\section{Introduction}
\label{sec:introduction}

Self-reconfigurable modular robots~\cite{survey2007} promise adaptive, robust, scalable, and cheap solutions in a wide range of technological areas, from aerospace engineering to medicine. 
Modular robots are envisioned to consist of identical building blocks arranged in a lattice and are intended to be highly versatile, due to their ability to reconfigure into arbitrary forms. 
An actual realization of this vision depends on fast and reliable reconfiguration algorithms, which hence have become an area of growing interest.

One of the best-studied paradigms of modular robots is the \emph{sliding cube model}~\cite{heterogeneous-03}. 
In this model, a robot \emph{configuration} is a face-connected set of cubic modules on the cubic grid. 
The cubes can perform two types of moves, illustrated in two dimensions in Figure~\ref{fig:moves}. 

\begin{figure}[h]
    \centering
    \includegraphics{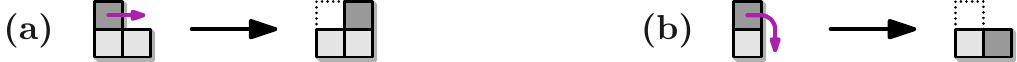}
    \caption{Moves admitted by the sliding cube model: \textbf{(a)} slide, \textbf{(b)} convex transition.}
    \label{fig:moves}
\end{figure}
\noindent
First, a module can \emph{slide} along two face-adjacent cubes to reach a face-adjacent empty grid cell. 
Second, a module $m$ can make a \emph{convex transition} around a module $m'$ to end in a vertex-adjacent empty grid cell. For this second move to be feasible, also the grid cell (not occupied by $m'$) face-adjacent to both the starting and the ending positions must be empty. 
There are several prototypes of modular robots that realize the sliding cube model in 2D~\cite{EMCube,Chiang01,Vertical98}.
Units of multiple other prototypes, 
including expandable and contractible units~\cite{Crystal,Telecube} as well as large classes of modular robots~\cite{metamodule1,metamodule2}, 
can be arranged into cubic \emph{meta-modules} consisting of several units 
such that the meta-module can perform slide and convex transition moves.
Thus, algorithmic solutions in the sliding cube model can be applied to modular robot systems realizing other models.

Another well-studied model for modular robots is the \emph{pivoting cube model}~\cite{heuristics-square,M-blocks}.
This model strengthens the free-space requirements for each move and has also been realized by some existing prototypes. Akitaya et al.~\cite{pivoting-socg21} showed very recently that in the pivoting cube model in two dimensions it is PSPACE-hard to decide whether it is possible at all to reconfigure one configuration into the other. However, if one allows six auxiliary squares in addition to the input configuration, then there is a worst-case optimal reconfiguration algorithm~\cite{musketeers}. Other models for squares 
relax the face-connectivity condition~\cite{nadia}, 
restrict or enlarge the set of allowed moves~\cite{PSPACE-sliding-corners}, 
or relax the free-space requirements~\cite{squeezing11}. 

In this paper we study the reconfiguration problem for the sliding cube model in two dimensions (the \emph{sliding square model}).
Given two configurations of $n$~unlabeled squares
(each describing the relative positions of squares),
we compute a short sequence of moves that 
transforms one configuration into the other, while preserving edge-connectivity at all times.
Dumitrescu and Pach~\cite{dumitrescu-pushing-squares-2006}
described an algorithm which transforms any two configurations of $n$~squares into each other using $O(n^2)$ moves.
This bound is worst-case optimal: there are pairs of configurations (a horizontal and a vertical line)
which require $\Omega(n^2)$ moves for any transformation.
However, for other pairs of configurations, significantly fewer moves suffice.

We show in Section~\ref{sec:hardness} that it is NP-hard to minimize the number of sliding moves for a given pair of edge-connected configurations. Due the $O(n^2)$ upper bound on the number of moves, the corresponding decision problem is NP-complete.
In Section~\ref{sec:algorithm}, we present a input-sensitive and in-place
algorithm for self-reconfiguration, based on the ``compact-and-deploy'' approach. 
Using the basic principle that well-connected components of modular robots can be transformed efficiently, our algorithm iteratively increases the connectivity within a configuration, to arrive at a single solid $xy$-monotone component, before deploying it into the target configuration. Hence, our algorithm builds the target configuration in such a way, that the lower left corner of the bounding boxes of both configurations are aligned.
Our algorithm is \emph{input-sensitive}:
it requires only $O(\bar{P} n)$ sliding moves to transform one configuration into the other, where $\bar{P}$ is the maximum perimeter of the two bounding boxes.
Our algorithm is also \emph{in-place}: 
only one square at a time is allowed to move through cells edge-adjacent to the respective bounding box.

\begin{figure}[b]
    \centering
    \includegraphics[width=\textwidth]{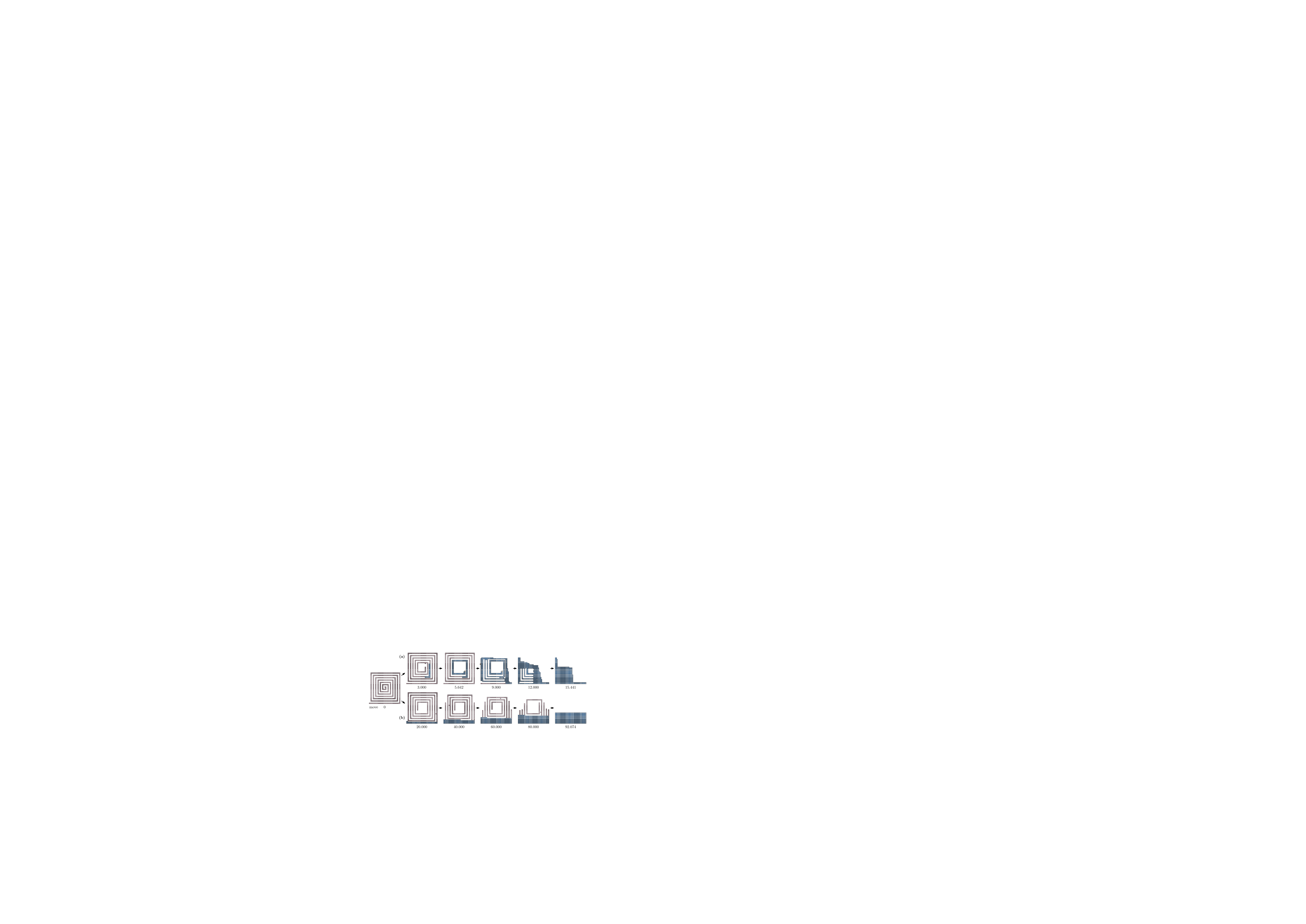}
    \caption{A spiral configuration in a $40 \times 40$ bounding box. \textbf{(a)} {Gather\&Compact}: gathering done after $5.642$ moves; total $15.441$ moves. \textbf{(b)} MSDP~\cite{dumitrescu-pushing-squares-2006,flooding}: total $92.074$ moves.\\ Video: \includegraphics[width=2.5mm]{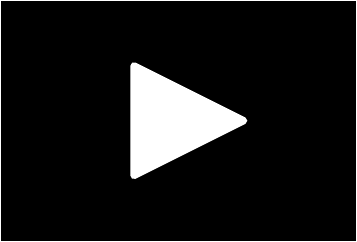} \url{https://tinyurl.com/algaspiral}}
    \label{fig:spiral}
\end{figure}

\subparagraph*{Lower bound.}
Our $O(\bar{P} n)$ bound is optimal (up to constant factors) among all
bounds parameterized by just $n$ and $\bar P$.  Consider any
$x \times y$ rectangle with perimeter $\bar P = 2x+2y$,
where $x \geq y \geq 7$, $x \leq n/2$, and $x \cdot y \geq 6n$.
Place $n/2$ squares at the top of the rectangle (the \emph{lintel}),
which occupy at least one row (because $x \leq n/2$) and at most
$1/6$ of the rows (because $x \cdot y \geq 6n$),
in both the source and target configuration.
Partition the remaining area of this rectangle into thirds
with roughly vertical cuts.
In the source configuration, fill $n/2$ squares in the left third
(column by column, left to right, a \emph{pillar}),
and in the target configuration, fill $n/2$ squares in the right third
(column by column, right to left, a \emph{pillar}).

Our algorithm aligns the lower left corners of the two bounding boxes. In this case it is easy to see that any reconfiguration needs to move $n/2$ squares from the source pillar to the target pillar across the middle third of the configurations,
which has width roughly $x/3$, which is $\Omega(\bar P)$ because $x \geq y$.
It remains to argue that no other alignment of the two bounding boxes reduces the required number of moves asymptotically. Any shift of the bounding boxes by $\Omega(x)$ moving the pillars close to each other, forces $\Omega(n)$ squares in the source and target lintel to move $\Omega(x) = \Omega(\bar P)$. Generally speaking, either the squares in the lintels or the squares in the pillars (or both) have to perform  $\Omega(\bar P n)$ sliding moves in total.

\subparagraph*{Comparison with Dumitrescu and Pach.}
The algorithm by Dumitrescu and Pach~\cite{dumitrescu-pushing-squares-2006} 
constructs a canonical shape from both input configurations. In the original paper this canonical shape is a strip that grows to the right of a right-most square and hence, necessarily, their algorithm always requires $\Theta(n^2)$ moves.
Moreno and Sacrist{\'a}n~\cite{BachelorMoreno,flooding} modify Dumitrescu and Pach to be in-place; their canonical shape is a rectangle within the bounding box of the input. 
For either type of canonical shape the algorithm roughly proceeds as follows. If there is a square which is a leaf in the edge-adjacency graph, then the algorithm attempts to move this square along the boundary towards the canonical configuration. If this leaf square ``gets stuck'' on the way, and hence increases its connectivity, or if there is no leaf in the first place, then the algorithm identifies a 2-connected square on the outside of the configuration which it can move towards the canonical configuration. Hence, if configurations are tree-like (such as the spiral illustrated in Figure~\ref{fig:spiral} left), then each square moves along all remaining squares, for a total of $\Omega(n^2)$ moves (see Figure~\ref{fig:spiral} bottom row). However, the width and the height of this spiral configuration is $O(\sqrt{n})$. Our algorithm gathers $\Theta(\sqrt{n})$ squares from the end of the spiral and then compacts in a total of $O(n \sqrt{n})$ moves. 

The in-place modification by Moreno and Sacrist{\'a}n of Dumitrescu and Pach (henceforth MSDP) has the potential to use fewer than $\Theta(n^2)$ moves in practice. In Section~\ref{sec:experiments} we compare our Gather\&Compact to MSDP experimentally; Gather\&Compact consistently outperforms MSDP by a significant margin, on all types of square configurations.

\section{Hardness of optimal reconfiguration}
\label{sec:hardness}

In this section we sketch the proof of Theorem~\ref{thm:NP-hard-sq}, all details can be found in Appendix~\ref{app:NPhard-sq}.

\begin{restatable}{theorem}{NPhard}
\label{thm:NP-hard-sq}
Let $\mathcal{C}$ and $\mathcal{C'}$ be two configurations of $n$ squares each  
and let $k$ be a positive integer. 
It is NP-complete to determine whether
we can transform $\mathcal{C}$ into $\mathcal{C'}$ using at most $k$ sliding moves while maintaining edge-connectivity at all times. 
\end{restatable}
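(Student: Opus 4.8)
The plan is to establish the two directions of NP-completeness separately. Membership in NP is immediate from the $O(n^2)$ upper bound of Dumitrescu and Pach: if $k$ exceeds this bound the answer is trivially yes, so we may assume $k = O(n^2)$, in which case a valid move sequence is a polynomial-size certificate that can be verified in polynomial time by replaying it and checking edge-connectivity after each move. The substance of the proof is therefore the NP-hardness, for which I would give a polynomial-time reduction from a planar, grid-friendly variant of satisfiability such as Planar Monotone 3-SAT, whose planar structure matches the two-dimensional grid geometry and avoids the numeric blow-up that a reduction from a number problem such as \textsc{Partition} would incur.

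The reduction builds a rigid \emph{backbone} of squares that appears identically in both $\mathcal{C}$ and $\mathcal{C}'$ and whose sole purpose is to keep every intermediate configuration edge-connected, so that the genuinely mobile squares can travel without the connectivity constraint ever forcing an unintended detour. Laid out along this backbone are \emph{variable gadgets} and \emph{clause gadgets}. Each variable gadget contains a small group of mobile squares whose source and target positions can be joined by exactly two disjoint corridors of equal length; traversing the ``true'' corridor versus the ``false'' corridor costs the same number of moves but commits the variable to one value, so the choice is free in cost yet meaningful in semantics. Each clause gadget sits at a grid location reachable cheaply by the corridor of any literal occurring in that clause, and I arrange the geometry so that unless at least one of its three literals routes through it, some square must take a longer detour to reach its target, paying a fixed surplus. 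Setting the budget $k$ to the sum of the equal-length corridor costs plus exactly the clauses' base costs then makes a within-budget reconfiguration possible if and only if some truth assignment satisfies every clause.

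For correctness, the forward direction is constructive: from a satisfying assignment I route each variable gadget through its chosen corridor and verify, move by move, that the backbone keeps the configuration connected and that the total meets $k$. The backward direction is the main obstacle. Because the squares are unlabeled and the global connectivity constraint together with the convex transition move permits a rich space of reconfiguration strategies, I cannot assume an optimal solution respects the gadget semantics; instead I would prove a robust lower bound via a charging or potential argument that counts the moves unavoidably spent crossing each corridor and covering each clause, independently of the order and identity of the moves. The delicate points are showing that the backbone is effectively \emph{rigid} — that no cheap reconfiguration gains by disassembling it — and that gadgets cannot ``leak'' into one another, which I would enforce by spacing gadgets far enough apart that any shortcut between them would itself exceed the budget. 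Establishing these geometric separation and rigidity lemmas, and folding them into a clean accounting that matches the budget exactly, is where the real work lies.
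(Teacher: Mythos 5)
Your overall strategy coincides with the paper's: membership in NP via the $O(n^2)$ upper bound of Dumitrescu and Pach, and hardness by reduction from \textsc{Planar Monotone 3SAT}, with variable gadgets offering two equal-cost routes encoding true/false, clause gadgets paying a surplus when unsatisfied, and large separations ensuring gadgets do not interact. The paper even shares your concerns about rigidity and alignment: it spaces the gadgets for variables $x_i$ and $x_{i+1}$ by $100m + in$ units, and appends to both configurations a square blob of side $\kappa^2 N^2 + N$ so that any relative placement of source and target other than the intended one already exceeds the budget --- a subtlety (the configurations are only given up to translation) that your plan does not address at all.

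The genuine gap is the mechanism by which a clause gadget ``detects'' that one of its literals was routed the right way. You place the clause so that, unless a literal's corridor passes through it, ``some square must take a longer detour,'' but in the sliding-square model proximity of traffic does not by itself make anything cheaper: a square passing nearby does not assist another square's moves. What the paper uses --- and what your sketch is missing --- is the connectivity constraint itself as the coupling device. The clause gadget's marked square is only two moves from its target, but that move is initially illegal because it would disconnect the pitchfork part of the gadget; it becomes legal exactly when a cycle through the clause's attachment exists, and such a cycle is created \emph{transiently} while the variable gadget's square travels along its top (resp.\ bottom) path, closing cycles with the upper (resp.\ lower) prongs and the wire gadgets. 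This makes the reduction inherently temporal: the six cheap clause moves must be interleaved with the variable moves at the moments when the cycle exists, and the lower-bound direction must show both that any six-move clause reconfiguration requires such a cycle and that clause reconfigurations cannot themselves create cycles exploitable by other clauses. Your proposed charging argument that ``counts the moves unavoidably spent crossing each corridor and covering each clause'' has no analogue of this; without a concrete gadget whose cost provably drops from strictly more than six to exactly six when (and only when) some literal of the clause is true, the backward direction cannot be completed, and that gadget design is precisely where the content of the proof lies.
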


We provide a reduction from \textsc{Planar Monotone 3SAT}. In particular, we start from a rectilinear drawing (see Figure~\ref{fig:monotone3SAT}) of a planar monotone 3SAT instance $\mathcal{C}$ with $n$ variables and $m$ clauses.
We create a problem instance of the reconfiguration problem whose size is polynomial in $n$ and $m$; we show that~$\mathcal{C}$ can be satisfied if and only if the corresponding reconfiguration problem can be solved in at most $66m+24n$ moves. 

\begin{figure}[b]
    \centering
        \includegraphics[page=3,scale=.85]{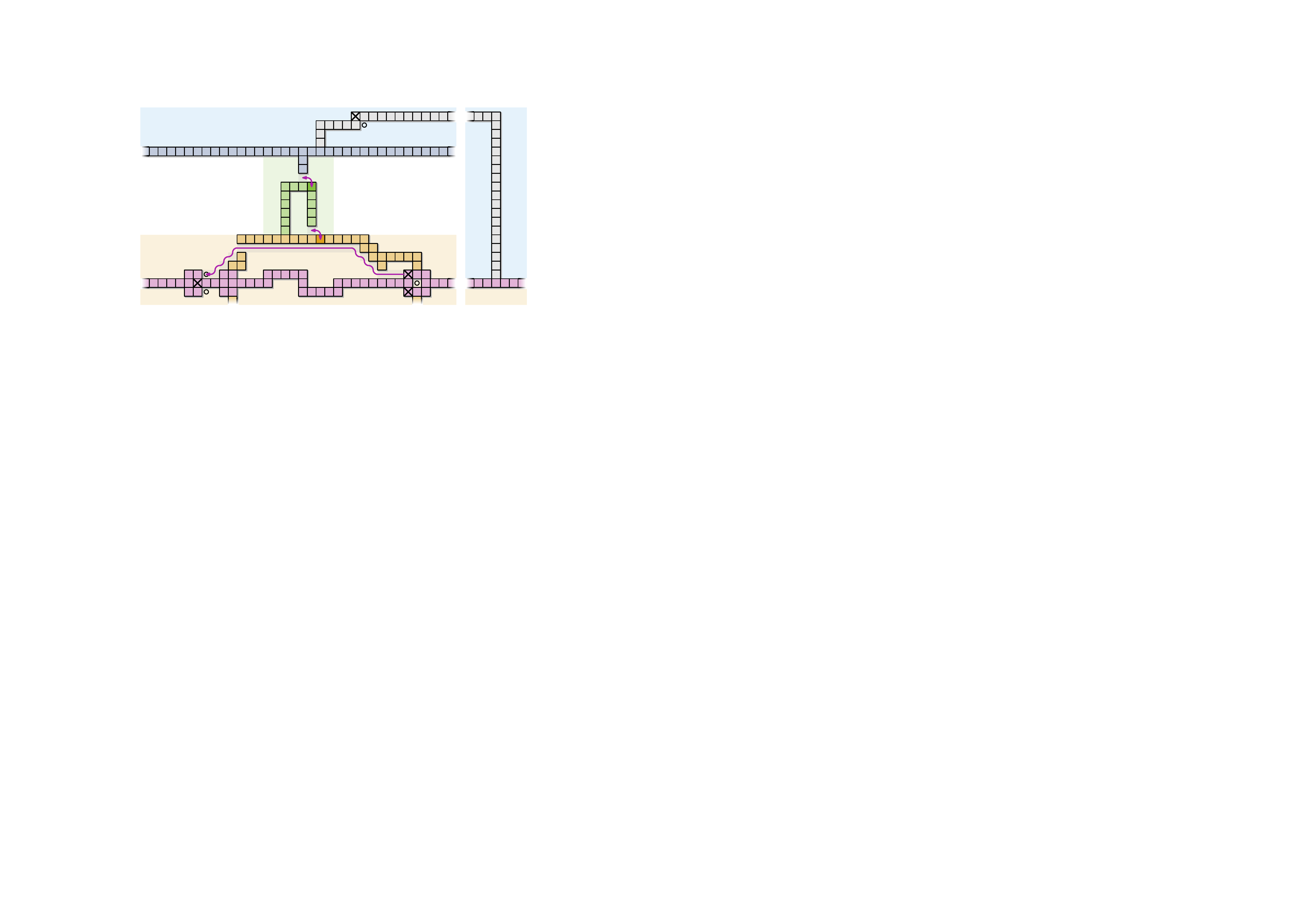}
        \caption{Rectilinear drawing of a \textsc{Planar Monotone 3SAT} instance. Our reduction attaches the variable gadgets horizontally and the clause gadgets 
        next to the rightmost literal in the clause.}
        \label{fig:monotone3SAT}
 \end{figure}
   
 \begin{figure}[tb]
        \centering
        \includegraphics[scale=.55,page=2]{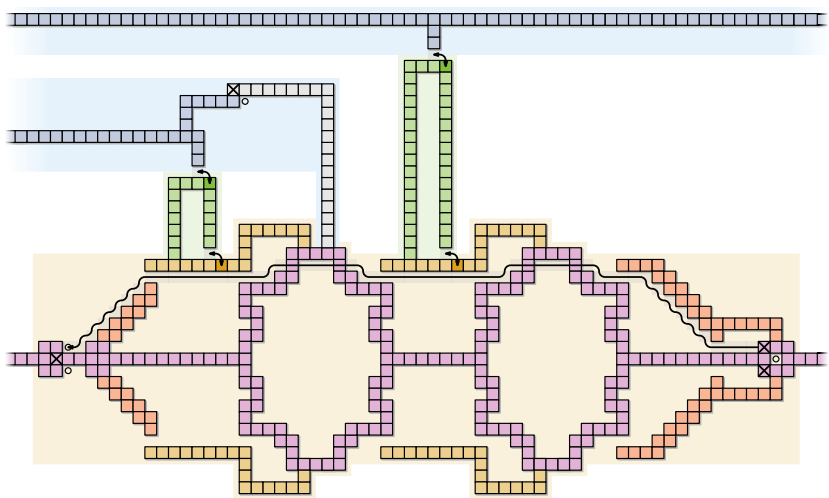} 
        \caption{Overview of the reduction. Background colors match Figure~\ref{fig:monotone3SAT}.}
        \label{fig:reduction}
\end{figure}

\begin{figure}[tb]
    \centering
    \includegraphics[scale=.55,page=4]{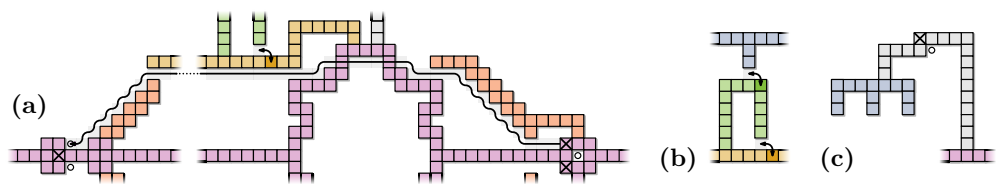}
    \caption{\textbf{(a)} Variable gadget. \textbf{(b)} Wire gadget. \textbf{(c)} Clause gadget.}
    \label{fig:gadgets}
\end{figure}

We replace each variable with a variable gadget, highlighted by an orange-shaded area in Figure~\ref{fig:reduction} and summarized in Figure~\ref{fig:gadgets}a. 
Consecutive variable gadgets are connected by a horizontal line of squares forming a central path of cycles through all variable gadgets (pink squares). 
More precisely, the gadget associated with variable $x_i$ has $k_i$ \diam-shaped cycles in the path of cycles, where~$k_i$ is the number of times that the variable $x_i$ appears in $\mathcal{C}$. 
Each \diam-shaped cycle has two {\em prongs} (yellow squares). 
The spacing between gadgets is large enough for different variable gadgets not to interact in an optimal reconfiguration process. 

The source and target configurations in the variable gadgets are very similar. 
The only difference is that the positions marked with $\times$ in Figure~\ref{fig:gadgets}a must be emptied and the positions marked with~$\circ$ must be occupied. 
Using an earth movers argument one can argue that a square must be transferred from the right of the gadget to the left, and the minimum number of moves required to do so is the horizontal distance between the positions, in this case $20k_i + 20$. 
Following the path in black in  Figure~\ref{fig:gadgets}a (or the symmetric one) achieves that bound. 
The remaining required changes need four additional moves. 
In Appendix~\ref{app:NPhard-sq} we argue that essentially there are only two ways to achieve the minimum number of moves: 

\begin{restatable}{lemma}{varlem}
\label{lem:variable}
The reconfiguration of variable gadget $x_i$ 
needs at least $20k_i+24$ many moves. 
Moreover, the only way to achieve 
that number moves implies transferring one of the right $\times$ squares to the left $\circ$ position at the same height along the path shown in Figure~\ref{fig:reduction} or along the horizontally symmetric one.
\end{restatable}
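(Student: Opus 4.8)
The plan is to establish the two halves of the statement with a single tool, an earth-mover (transportation) lower bound, and then a rigidity analysis of its equality case. The first half is the numeric bound of $20k_i+24$ moves; the second half characterises exactly which move sequences attain it.

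For the lower bound I would first record the one structural fact about moves that drives everything: both a slide and a convex transition relocate a single square to a cell at Chebyshev (king-move) distance exactly $1$ from its origin. Consequently, if $\mathrm{EMD}$ denotes the earth-mover distance between two configurations under the Chebyshev ground metric, then any one move changes $\mathrm{EMD}(\mathcal{C},\cdot)$ by at most $1$, and summing over the sequence via the triangle inequality shows that the number of moves is at least $\mathrm{EMD}(\mathcal{C},\mathcal{C'})$. Since source and target agree on every cell except the $\times$ cells (occupied only in $\mathcal{C}$) and the $\circ$ cells (occupied only in $\mathcal{C'}$), this $\mathrm{EMD}$ equals the minimum-cost perfect matching between the $\times$ set and the $\circ$ set under Chebyshev distance. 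I would then verify that this matching costs $20k_i+24$: the unique far pair, a right $\times$ and the left $\circ$ at the same height, contributes its horizontal distance $20k_i+20$, and the remaining local $\times$/$\circ$ changes contribute $4$. Phrasing the whole bound as a single Chebyshev transportation problem is what makes the \emph{additivity} clean, since it automatically avoids the double-counting of convex transitions that a naive sum of separate horizontal and vertical potentials would incur; this is the one place in the counting where I expect to have to be careful.

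For the \emph{moreover} part I would analyse the equality case. If exactly $20k_i+24$ moves are used then every move must be productive, that is, it must carry one unit of mass along a shortest Chebyshev path of an optimal matching with no slack whatsoever; in particular no move may increase the horizontal coordinate of the transported mass. This forces the long transfer to be realised by a single square that travels $x$-monotonically from a right $\times$ cell to the left $\circ$ cell at the same height, where intermediate convex transitions may still bob up and down in $y$ so long as $x$ never increases and no move is wasted. It then remains to enumerate the $x$-monotone, edge-connectivity-preserving trajectories through the $k_i$ consecutive \diam-shaped cycles. Here the diamond structure is essential: at each cycle the monotone square may pass above or below the pair of prongs, and I would argue that any switch of side inside the gadget either disconnects the configuration at a prong or spends a move that the tight budget cannot afford, so the trajectory stays on one side throughout. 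The two globally consistent choices correspond precisely to sending the upper or the lower right $\times$ square to the far-left $\circ$, which are exactly the black path of Figure~\ref{fig:reduction} and its horizontal reflection.

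I expect the rigidity half, not the numeric bound, to be the main obstacle. The transportation bound is essentially forced once the Chebyshev-distance-$1$ property of moves is observed, but ruling out \emph{every} other optimal trajectory requires a careful case analysis of how an $x$-monotone square can cross a single diamond while keeping both prongs and the central path of cycles edge-connected, together with an argument that mixing the two sides is never free. This is exactly the rigidity that lets the gadget encode a single binary truth value, so it is the crux on which the correctness of the reduction rests.
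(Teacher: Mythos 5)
Your lower bound is, in substance, the paper's own argument: the paper also proves the $20k_i+24$ bound by an earth-mover/minimum-matching argument, splitting the cost into $20k_i+20$ for the forced pairing of a right $\times$ square with the left $\circ$ position plus $4$ for the two near pairs. Your Chebyshev-metric formalization (each slide or convex transition displaces one square by Chebyshev distance exactly $1$, so each move changes the transportation cost by at most $1$) is a clean way to make the additivity automatic, and it is consistent with what the paper does.

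The genuine gap is in the rigidity half, at the sentence ``this forces the long transfer to be realised by a single square that travels $x$-monotonically.'' Tightness of the transportation bound does not force a single traveling square; it only forces that the moves, viewed as a flow, advance one unit of mass by one unit of distance per move --- equivalently (as the paper puts it), for each pair of consecutive $x$-coordinates between the matched cells there is \emph{exactly one} $x$-monotone move crossing it. A relay achieves this just as well: a square one step downstream vacates a cell, and a square arriving from the right later fills it. This is not a hypothetical loophole --- the paper's optimal sequences are exactly of this form: to cross a \diam-shaped cycle within the budget, a square \emph{of the cycle} in the second row from the top moves left, and its old position is afterwards refilled by a square coming from the right (Figure~\ref{fig:sequence-var}a); no single square traverses the diamond. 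Because your equality-case analysis enumerates trajectories of one traveling square, it cannot deliver the ``only way'' conclusion of the lemma: it never examines relay-type solutions, which include the actual optimal ones. The analysis has to be recast in flow terms, as the paper does: ask which rows of a diamond can carry an $x$-monotone hand-off at all. The answer is only the second row from the top (or, symmetrically, from the bottom), because the position of a moved cycle square must itself be refillable by a strictly $x$-monotone, connectivity-preserving sequence, and the top row is unreachable by such moves; linking consecutive diamonds $x$-monotonically then forces every diamond to be crossed on the same side, giving the two paths of Figure~\ref{fig:reduction}. A secondary omission: you never argue that $20k_i+24$ is attainable, which the paper establishes by a delicate move ordering (advance a right $\times$ square three steps to close a cycle, slide the middle square into the near $\circ$ position, backfill, then complete the long transfer); the forward direction of the reduction relies on this attainability.
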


Lemma~\ref{lem:variable} shows that to reconfigure using as few moves as possible we must transfer one of the two $\times$ squares on the right to a $\circ$ position on the left. During the process, the $\times$ square creates cycles involving the central path of cycles and either all upper or lower prongs. 

The clause gadget mainly consists of a set of squares forming a \emph{pitchfork} ($\pitchfork$) shape (blue squares in Figure~\ref{fig:gadgets}c). 
The pitchfork has three \emph{tines} consisting of two squares each. 
Each tine corresponds to a literal in the clause. 
We add a path of squares connecting the $\pitchfork$ shape to the central path of cycles so that the source configuration is connected (gray squares). 
Most squares are in both the source and the target configurations. The only exception is one square (marked with $\times$) that wants to be transferred to a nearby position (marked with $\circ)$. 
However, the move is initially not possible as it would disconnect the $\pitchfork$ part.

The wire gadgets are connected to the variable gadgets and part of them is placed very close to each of the tines of a pitchfork. 
A wire gadget is a path of squares that form a $\sqcap$ shape for positive literals and a $\sqcup$ shape for negative ones (see Figure~\ref{fig:gadgets}b, green squares). 
Each wire gadget is attached to a different prong of the corresponding variable gadget. 
This associates each literal in a clause to a wire gadget and a prong (note that there can be spare prongs). 
The goal is to allow creating a different connection between the $\pitchfork$ of a clause gadget and the central path of cycles in two moves as long as the prong is in a cycle.

To avoid interference between different gadgets we place the clause gadgets at different heights and make the vertical separations between gadgets large enough. 
With that we prove (in Appendix~\ref{app:NPhard-sq}) the following lemma:

\begin{restatable}{lemma}{clauselem}
\label{lem:clause}
A clause gadget needs at least six moves to be reconfigured and it can be reconfigured with six moves if and only if 
a prong associated to a literal in the clause is part of a cycle. 
\end{restatable}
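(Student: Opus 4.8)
The plan is to establish the two assertions separately: an unconditional lower bound of six moves, and a matching upper bound that is achievable exactly when one of the three prongs associated to the clause's literals lies on a cycle. I reason entirely \emph{locally}: because the clause gadgets sit at pairwise well-separated heights and the vertical gaps between gadgets are large, no move sequence that reconfigures a single clause gadget can save moves by interacting with another gadget, so it suffices to analyse the clause gadget together with the tine-adjacent ends of its three wires.

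For the lower bound, I would begin from the observation that the source and target configurations of the gadget differ in exactly one cell: the $\times$ square must end at the $\circ$ position. The central structural fact is that, in the source configuration, the $\times$ square is the unique square attaching the $\pitchfork$ to the remainder of the configuration, so moving it away before an alternative connection exists breaks edge-connectivity. Hence every valid sequence decomposes into three phases: (i) create a temporary \emph{bridge} between a tine of the $\pitchfork$ and a neighbouring wire, (ii) relocate $\times$ to $\circ$, and (iii) dismantle the bridge. Phase (iii) is forced because every wire square occurs in both source and target and must be restored. I then argue that phase (i) costs at least two moves (any tine starts at grid distance two from any wire, and a single square rounding a corner uses two convex transitions), phase (iii) costs at least two more (it reverses phase (i)), and phase (ii) costs at least two (the $\times$ square cannot reach $\circ$ in a single connectivity-preserving move and must round the base of the $\pitchfork$). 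Summing gives at least six moves.

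For the upper bound I treat the two directions. In the forward direction, suppose a prong associated to a literal in the clause lies on a cycle. Then, as noted in the construction, the cycle supplies a redundant connection, so the end of the wire attached to that prong contains a square that can be displaced toward the adjacent tine without disconnecting anything; this yields the bridge in two moves. I would then write out an explicit six-move schedule---two moves to form the bridge, two moves to slide $\times$ around to $\circ$, and two moves to undo the bridge---and verify edge-connectivity after each single move. In the converse direction, suppose no associated prong is on a cycle. Then each such prong is attached in a tree-like fashion, so every square at the tine-facing end of the corresponding wire is a cut square: displacing it to form a bridge would disconnect the prong from the main body. Forming a bridge therefore requires at least one extra move to first supply an auxiliary connection, so the bridging phases together cost more than four moves and the total strictly exceeds six. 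This proves the equivalence.

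The step I expect to be the main obstacle is making the per-phase lower bounds rigorous and ruling out alternative strategies: I must show that no clever sequence---bridging through a different tine, routing $\times$ or the bridge through the gray connecting path, or interleaving the three phases---beats six moves unless a cycle is present. The delicate work is the connectivity bookkeeping that certifies that each phase needs at least two moves; I would control it by exploiting the large inter-gadget spacing to confine the analysis to a constant-size neighbourhood of the $\pitchfork$, reducing the claim to a finite case check.
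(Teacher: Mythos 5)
Your overall architecture matches the paper's proof: the same three-phase decomposition (create a temporary connection between a tine and a wire, transfer $\times$ to $\circ$, undo the connection), the same $2+2+2$ accounting for the six moves, and the same converse argument (no prong cycle forces at least one extra enabling move, hence strictly more than six). However, the mechanism you give for the forward direction has a genuine flaw, and it sits exactly at the crux of the ``if and only if''. You claim that a cycle through the prong ``supplies a redundant connection, so the end of the wire attached to that prong contains a square that can be displaced toward the adjacent tine without disconnecting anything.'' This is false: the wire is a path hanging off the prong, so the wire squares facing the tine are cut squares \emph{regardless} of whether the prong lies on a cycle --- displacing one of them disconnects the remainder of the wire (not the prong, as you write in your converse). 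The move-by-move connectivity check you promise would fail at the very first move of your schedule. (The only wire square that is always free to move is the leaf at the wire's far end, but it is many moves away from the tine, so it cannot produce a two-move bridge.)

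The missing idea is a two-stage enabling step, which is the paper's actual schedule: because the prong lies on a cycle, a \emph{prong} square (otherwise a cut square) can be moved so that the \emph{wire gadget itself} closes into a cycle; only then does a wire square near the tine become non-cut and able to move into the cell that connects wire and pitchfork. That gives one prong move, one wire move, two moves for $\times \to \circ$, and two moves to undo the prong and wire displacements --- six in total. The same mechanism is what makes the converse go through: absence of a prong cycle does not make bridging impossible (the wire's leaf square can still crawl around the wire toward the tine), it only makes the cheapest redundancy-creating step cost at least one additional move, which pushes the total strictly above six. Your converse, as written, rests on the incorrect dichotomy that the prong cycle is what renders wire-end squares non-cut, so it would need to be redone in these terms: count which single moves can create the redundancy that frees a square to complete the bridge.
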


The six moves required by a clause gadget are in fact additional to the $20k_i+24$ moves required to reconfigure the gadget for variable $x_i$ and to the six moves required by any other clause gadget. 
If we  allow  only the minimum number of moves per gadget, 
Lemma~\ref{lem:variable} forces that in each variable gadget either the upper or the lower prongs become part of cycles. 
Moreover, Lemma~\ref{lem:clause} requires that for each clause there is a prong associated to a literal in the clause that becomes part of a cycle as part of the reconfiguration of the variable gadgets. 
This implies that if a reconfiguration sequence exists, the 3SAT instance must be satisfiable. 
In the other direction, if the 3SAT instance is satisfiable, then we show how to order the moves carefully to reconfigure with the minimum number of moves required.

\begin{restatable}{lemma}{finallem}
A \textsc{Planar Monotone 3SAT} instance can be solved if and only if the corresponding  reconfiguration problem instance can be solved using $66m+24n$ sliding moves.
\end{restatable}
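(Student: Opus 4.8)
The plan is to derive the target bound $66m+24n$ as the exact sum of the per-gadget minima supplied by Lemmas~\ref{lem:variable} and~\ref{lem:clause}, and then to show that this sum is attainable \emph{if and only if} every gadget can be reconfigured at its own minimum while the resulting local choices are globally consistent. First I would account for the number itself. Summing the variable lower bounds over all $n$ variables gives $\sum_{i=1}^{n}(20k_i+24)=20\sum_i k_i+24n$. Since the instance is a 3SAT formula, every clause contains exactly three literal occurrences, so $\sum_i k_i=3m$ and the variable contribution is $60m+24n$. Adding the $6m$ contributed by the $m$ clause gadgets (Lemma~\ref{lem:clause}) yields exactly $66m+24n$. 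Thus $66m+24n$ is precisely the sum of the individual minima, and the whole argument reduces to relating simultaneous optimality of all gadgets to satisfiability.

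For the backward direction I would argue as follows. Any valid reconfiguration must realize the prescribed change in every gadget, so the lower bounds of Lemmas~\ref{lem:variable} and~\ref{lem:clause} apply to the moves charged to each gadget. The large inter-gadget spacing (details in Appendix~\ref{app:NPhard-sq}) guarantees that in any reconfiguration using at most $66m+24n$ moves these charges are disjoint, so the per-gadget minima add up; hence a budget of $66m+24n$ forces every gadget to use exactly its minimum. The \emph{moreover} clause of Lemma~\ref{lem:variable} then forces, in each variable gadget, either all upper or all lower prongs to become part of cycles. Reading ``upper prongs in cycles'' as setting $x_i$ so that its positive clauses are helped, and ``lower prongs in cycles'' as satisfying its negative clauses, yields a well-defined truth assignment; this is exactly where monotonicity is used, since each clause is uniformly positive or negative. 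By Lemma~\ref{lem:clause}, every clause gadget attains six moves only if some prong associated with one of its literals lies in a cycle, i.e.\ that literal is satisfied by this assignment. Hence every clause is satisfied and the formula is satisfiable.

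For the forward direction I would start from a satisfying assignment and set each variable gadget to put its upper (resp.\ lower) prongs into cycles according to the truth value, reconfiguring it in $20k_i+24$ moves along the path of Lemma~\ref{lem:variable}. For each clause I select one literal made true by the assignment; its prong is then in a cycle, so by Lemma~\ref{lem:clause} the clause gadget can be reconfigured in six moves. The remaining point is scheduling for connectivity: the clause move transferring $\times$ to $\circ$ would temporarily disconnect the $\pitchfork$, so the alternative connection through the wire gadget must be established first, which is legal exactly because the chosen prong is already in a cycle. Performing all variable reconfigurations first, then the wire connections, then the pitchfork transfers in the order dictated by Lemma~\ref{lem:clause}, keeps the configuration edge-connected throughout; summing gives $60m+24n+6m=66m+24n$ moves.

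The hard part is the additivity/non-interference claim underlying the ``forces every gadget to use exactly its minimum'' step: one must rule out a clever global sequence in which moves shared between neighboring gadgets beat the sum of the local minima, and conversely guarantee that all local minima can be met simultaneously without the connectivity constraint forcing unavoidable extra moves. This is precisely what the large inter-gadget spacing is designed to ensure---it confines each gadget's optimal activity to a region that cannot profitably be entered by another gadget's squares---and I would defer the geometric bookkeeping for this to Appendix~\ref{app:NPhard-sq}.
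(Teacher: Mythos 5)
Your accounting of the bound and the overall architecture (per-gadget minima, additivity via the large spacing, truth assignment read off from the top/bottom path choice) match the paper's proof. However, your forward direction contains a genuine error in the scheduling, and scheduling is exactly the delicate part here. The cycles through the prongs are \emph{transient}: a prong is part of a cycle only while the variable gadget's transferred $\times$ square occupies particular positions along the top (or bottom) path, and the source and target configurations of a variable gadget both have their prongs as plain paths, not in any cycle. Consequently, your plan to perform ``all variable reconfigurations first, then the wire connections, then the pitchfork transfers'' fails at the first wire move: after every variable gadget has finished its $20k_i+24$ moves, no prong is in a cycle anymore, so moving a square out of a prong (the first step of the six-move clause reconfiguration) disconnects the rest of that prong, violating connectivity. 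The paper's proof instead \emph{interleaves} the clause reconfigurations with the variable reconfigurations: it advances the transferred square along the chosen path, pauses at each position where a cycle through a prong is created, and at that very moment executes the six moves of every not-yet-reconfigured clause whose satisfying literal is associated with that prong, before resuming the transfer. With your ordering the move count would exceed the budget (or the sequence would simply be invalid), so this is not a presentational difference but a step that would fail.

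A secondary, smaller gap sits in your backward direction: after forcing each gadget to its exact minimum, you read off the assignment and conclude via Lemma~\ref{lem:clause} that the satisfying prong-cycle ``means the literal is satisfied by this assignment.'' This implicitly assumes that \emph{every} cycle containing a prong arises from a variable gadget's top/bottom traversal. The paper proves this explicitly: when a clause gadget is reconfigured with six moves, the cycle created passes only through the contributing prong, wire, clause, and central path, and in particular puts no prong into a cycle; hence all prong-cycles are indeed attributable to variable gadgets and the assignment is well defined. Your deferral of ``non-interference'' to the spacing argument covers additivity of the move counts, but not this structural claim about where prong-cycles can originate.
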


\section{Input-sensitive in-place algorithm}
\label{sec:algorithm}

To describe our input-sensitive reconfiguration algorithm, we first need to introduce the following definitions and notations.
Let $\mathcal{C}$ be an edge-adjacent configuration of squares on the square grid and let $G$ be the \emph{edge-adjacency graph} of $\mathcal{C}$. In~$G$ each node represents a square and two nodes are connected by an edge, if the corresponding squares are edge-adjacent.
With slight abuse of notation we identify the squares and the nodes in the graph. A square $s \in \mathcal{C}$ is called a \emph{cut square} if $\mathcal{C} \setminus \{ s \}$ is disconnected. Otherwise, $s$ is called a \emph{stable square}.

A \emph{chunk} is any inclusion-maximal set of squares in~$\mathcal{C}$ bounded by (and including) a simple cycle in~$G$ of length at least $4$ (its \emph{boundary cycle}), including any edge-adjacent squares with degree-1 in~$G$ (its \emph{loose squares}).
A chunk constitutes a well-connected component that can be efficiently transformed towards an $xy$-monotone configuration.

A \emph{link} is a connected component of squares which are not in any chunk. A \emph{connector} is a chunk square edge-adjacent to a square in a link or in another chunk. By definition a connector is always a cut square. The \emph{size} of a chunk~$C$ is the number of squares contained in~$C$ (including its boundary cycle and any loose squares).

Figure~\ref{fig:component-tree}a shows an example configuration with its chunks, links, connectors, and cut\,/\,stable squares marked. Note that a square can be part of two chunks simultaneously, in which case it must be a connector (for example, see the leftmost connector in Figure~\ref{fig:component-tree}a). A chunk can contain both cut and stable stables.

\begin{figure}[t]
    \centering
    \includegraphics{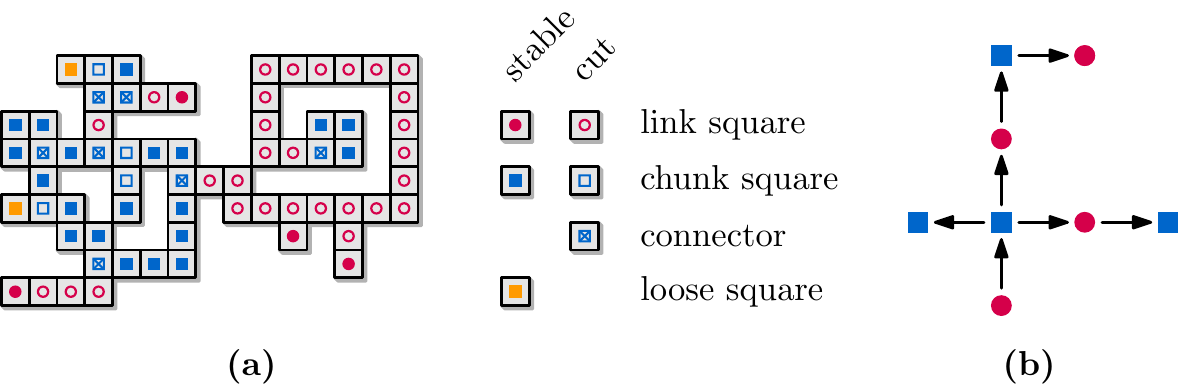}
    \caption{\textbf{(a)} A configuration~$\mathcal{C}$. \textbf{(b)} The component tree~$T$.}
    \label{fig:component-tree}
\end{figure}

The \emph{component tree} $T$ of $\mathcal{C}$ has a vertex for each chunk or link and an edge $(u, v)$ iff the chunks\,/\,links represented by $u$ and~$v$ 
have edge-adjacent squares or share a square (in the case of two adjacent chunks). The component tree is rooted at the component that contains the leftmost square in the bottom row, the \emph{root square}, of~$\mathcal{C}$ (see Figure~\ref{fig:component-tree}b). 
If a chunk is a leaf of $T$ then we call it a \emph{leaf chunk}.

A \emph{hole} in~$\mathcal{C}$ is a finite maximal vertex-connected set of empty grid cells. The infinite vertex-connected set of empty grid cells is the \emph{outside}. 
If a chunk~$C$ encloses a hole in~$\mathcal{C}$, we say that $C$ is \emph{fragile}. 
Otherwise, we say that $C$ is \emph{solid}.
The \emph{boundary} of the configuration~$\mathcal{C}$ is the set of squares which are vertex-connected to any grid cell on the outside.
The \emph{boundary} of a hole~$H$ is the set of squares vertex-connected to any grid cell in~$H$. 
Note that the boundary of a hole is edge-connected.
We can construct the component tree in $O(n)$ time by walking along the boundary of~$\mathcal{C}$.

\begin{wrapfigure}[7]{r}{0.2\textwidth}
    \raggedleft
    \vspace{-\intextsep}
    \includegraphics{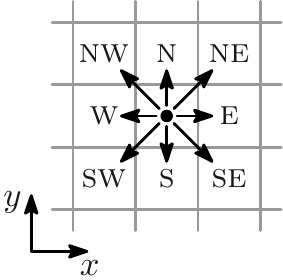}
\end{wrapfigure}
Consider now the bounding box $B$ of $\mathcal{C}$ on the square grid. We refer to the bottom-most left-most grid cell inside $B$ as the \emph{origin}. We say that $P$ is the perimeter of $B$, and hence any square in $\mathcal{C}$ can be connected to the origin by an $xy$-monotone path of at most $P/2$ squares.

Let~$c = (x, y)$ be a grid cell. We use compass directions (\dirN, \dirNE, \dirE, etc.) to indicate neighbors of~$c$. When we use grid coordinates, we assume the usual directions (the $x$-axis
increases towards \dirE and the $y$-axis increases towards \dirN, so the \dirN-neighbor of~$c$ is $(x, y + 1)$). Similarly, we indicate slide moves using compass directions (`a \dirW-move') and convex transitions using a sequence of two compass directions (`a \dirWS-move': a movement toward \dirW followed by a movement towards \dirS).

\subparagraph*{Algorithmic outline.}
In the first phase of our algorithm we ensure that the leaves of the component tree~$T$ are sufficiently large and well-connected.
Specifically, we gather squares from the leaves of~$T$ until each leaf is a chunk of size at least $P$. In Section~\ref{sec:gathering} we explain how to grow chunks using at most $O(P)$ moves per each square that was moved. 
During this process, the final position of each square is chosen inside bounding box $B$, but squares can move through the layer of grid cells adjacent to $B$.

After gathering, all leaves are \emph{heavy chunks} of size at least $P$. Our goal is now to make each leaf chunk contain the origin, while ensuring that all squares remain part of their chunk.
A heavy chunk $C$ contains sufficient number of squares to be transformed into a chunk containing both the connector of $C$ and the origin: we can connect the connector with the origin by an $xy$-monotone path of at most $P/2$ squares; two such paths, which are disjoint, form a new boundary cycle for $C$.
We do not explicitly construct these two paths, but instead we \emph{compact} the configuration by filling holes and using lexicographically monotone movement towards the origin for squares in heavy leaf chunks. In Section~\ref{sec:compaction} we explain the details of the compaction algorithm and prove that it leaves us with a solid $xy$-monotone component. 

During compaction each square in a leaf chunk makes only lexicographic monotone moves towards the origin while staying inside~$B$: \dirS- and \dirW-moves (slides), as well as \dirSW-, \dirWS-, \dirNW-, and \dirWN-moves (convex transitions). In some cases, a square in the leftmost column or bottom row can exit $B$, and move along the bounding box to enter the same column/row again closer to the origin. This is the only time when a non-lexicographic monotone move is used, and every square can perform it at most $O(P)$ times. Hence the total number of moves during compacting is $O(Pn)$.

When compacting, every (heavy) leaf chunk will eventually contain a square at the origin. This means that the whole configuration becomes a single chunk, as all leaves of the component tree have merged into a single component. Therefore, once compacting has finished we arrive at an $xy$-monotone configuration that fits inside $B$. If at any point during this process the configuration becomes $xy$-monotone, then we simply stop. In particular, if the configuration is $xy$-monotone at the start, for example squares in only a single row or column, then we do not have to gather or compact, even though there are no heavy chunks. See the top row of Figure~\ref{fig:spiral} for a visual impression of our algorithm.

In the special case that the input configuration $\mathcal{C}$ contains less than $P$ squares, we first ensure that $\mathcal{C}$ contains the origin and then execute the gathering and compaction steps as before. The number of moves is trivially bounded by $O(Pn)=O(P^2)$, see Appendix~\ref{sec:light}.

Finally, in Section~\ref{sec:canonical} we show how to convert any $xy$-monotone configuration into a different $xy$-monotone configuration with at most $O(\bar{P}n)$ moves, where $\bar{P}$ is the maximum perimeter of the bounding boxes of source and target configurations. Thus, since all moves are reversible, we can transform the source into the target configuration via this transformation.
All omitted proofs can be found in Appendix~\ref{sec:omitted}.

\begin{theorem}
Let $\mathcal{C}$ and $\mathcal{C'}$ be two configurations of $n$ squares each, let $P$ and $P'$ denote the perimeters of their respective bounding boxes, and let $\bar{P} = \max \{ P, P' \}$. We can transform $\mathcal{C}$ into $\mathcal{C'}$ using at most $O(\bar{P}n)$ sliding moves while maintaining edge-connectivity at all times. 
\end{theorem}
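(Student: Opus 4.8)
The plan is to assemble the final bound directly from the algorithmic pipeline described in the outline, treating gathering, compaction, and canonicalization as three phases whose move counts each obey an $O(\bar Pn)$ bound. First I would reduce the two-sided reconfiguration to a one-sided normal form: since every sliding move and convex transition is reversible, it suffices to transform $\mathcal{C}$ into a canonical $xy$-monotone configuration and, separately, transform $\mathcal{C'}$ into a canonical $xy$-monotone configuration, and then concatenate the first sequence with the reverse of the second. The only subtlety in this reduction is that the two canonical configurations must coincide as sets of occupied cells so that the concatenation is well-defined; because $\mathcal{C}$ and $\mathcal{C'}$ have the same number of squares $n$, and because Section~\ref{sec:canonical} provides a transformation between \emph{any} two $xy$-monotone configurations of equal size, I can route both through a common target without needing the two bounding boxes to have equal perimeter.

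Next I would bound the cost of bringing a single configuration $\mathcal{C}$ (with bounding box perimeter $P$) to $xy$-monotone form. The gathering phase, by the result of Section~\ref{sec:gathering}, spends at most $O(P)$ moves per square that is relocated, and since each of the $n$ squares is moved at most a constant number of times during gathering, this contributes $O(Pn)$. The compaction phase, by Section~\ref{sec:compaction}, moves each square only lexicographically monotonically toward the origin inside $B$, except for the boundary-traversal moves which each square performs at most $O(P)$ times; since there are $n$ squares this again yields $O(Pn)$, and the phase terminates at a solid $xy$-monotone component. The light-configuration case ($n < P$) is handled separately in Appendix~\ref{sec:light} with the same $O(Pn)$ guarantee. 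Summing the two phases gives $O(Pn)$ to reach $xy$-monotone form, and symmetrically $O(P'n)$ for $\mathcal{C'}$; both are dominated by $O(\bar Pn)$.

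For the final piece I would invoke the canonicalization result of Section~\ref{sec:canonical}, which converts any $xy$-monotone configuration into any other using $O(\bar Pn)$ moves, to connect the two $xy$-monotone intermediate configurations through a shared canonical shape (for instance, a single solid rectangle or staircase of $n$ squares whose bounding box has perimeter at most $\bar P$). Concatenating \textbf{(i)} $\mathcal{C}\to$ monotone, \textbf{(ii)} monotone $\to$ canonical, \textbf{(iii)} canonical $\to$ monotone$'$ (the reverse of the analogous transformation for $\mathcal{C'}$), and \textbf{(iv)} monotone$' \to \mathcal{C'}$ gives a complete sequence of total length $O(\bar Pn) + O(\bar Pn) + O(\bar Pn) + O(\bar Pn) = O(\bar Pn)$. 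Edge-connectivity is preserved throughout because each constituent phase preserves it and reversing a connectivity-preserving sequence preserves connectivity as well.

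The main obstacle I expect is not in the arithmetic of summing the phase bounds but in verifying that the phases compose cleanly, and in particular that \emph{in-placeness} is maintained across the junction between gathering and compaction. Specifically, I must confirm that after gathering every leaf of the component tree is a heavy chunk of size at least $P$ (the precondition that compaction relies on to build a boundary cycle connecting each connector to the origin via two disjoint $xy$-monotone paths of length at most $P/2$), and that the boundary-layer excursions used in compaction never cause two squares to leave the bounding box simultaneously. These invariants are established inside the respective sections, so at the level of this theorem I would cite them as black boxes; the delicate accounting lives in those lemmas rather than here.
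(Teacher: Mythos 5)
Your proposal is correct and takes essentially the same approach as the paper: gather and compact each of $\mathcal{C}$ and $\mathcal{C'}$ into an $xy$-monotone configuration ($O(Pn)$ and $O(P'n)$ moves respectively), bridge the two monotone configurations via the lemma of Section~\ref{sec:canonical} in $O(\bar{P}n)$ moves, and finish by reversing the second sequence. The only difference is cosmetic: the paper transforms $M$ directly into $M'$ (the Section~\ref{sec:canonical} lemma already handles \emph{any} pair of $xy$-monotone configurations), so your extra hop through a shared canonical shape is unnecessary, though harmless for the asymptotic bound.
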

\begin{proof}
For any two configurations $\mathcal{C}$ and $\mathcal{C'}$ of $n$ squares each, we can apply gathering and compacting to find $xy$-monotone configurations $M$ in $O(Pn)$ and $M'$ in $O(P'n)$ moves, for~$\mathcal{C}$ and $\mathcal{C'}$ respectively. If we want to transform $\mathcal{C}$ into $\mathcal{C'}$, we first gather and compact $\mathcal{C}$ into~$M$, transform $M$ into $M'$ in $O(\bar{P}n)$ moves, and proceed by reversing the sequence of steps for $\mathcal{C'}$ to get configuration~$\mathcal{C'}$. In Sections~\ref{sec:gathering},~\ref{sec:compaction} and~\ref{sec:canonical} we show that gathering, compacting and transforming $xy$-monotone configurations require the appropriate number of moves, such that the total number of moves is $O(Pn +\bar{P}n + P'n) = O(\bar{P}n)$.
\end{proof}

\subsection{Gathering}
\label{sec:gathering}

In this section we show how to gather squares from the leaves of the component tree $T$ until we create a chunk of size at least $P$ that is a leaf of $T$. In the following, let $s$ be a connector or a cut square in a link. By definition, $s$ lies on the boundary of $\mathcal{C}$.
Since $s$ is a cut square, removing $s$ from $\mathcal{C}$ results in at least two connected components.
One of these components contains the root of $T$.
We say that the other (up to three) components are \emph{descendants} of $s$.
Let~$D$ be the set of squares in the descendant components of~$s$. We say that the \emph{capacity} of~$s$ is $|D|$, and that $s$ is \emph{light} if its capacity is less than~$P$ and \emph{heavy} otherwise.

\begin{lemma}
    \label{lem:light-square}
    Let~$s$ be a light square with descendant squares~$D$. Then $s$ can be made part of a chunk with a sequence of $O(P)$ moves by squares in~$D$. The final position of~$s$ is within bounding box~$B$.
\end{lemma}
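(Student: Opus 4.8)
The goal is to show that a light square $s$—a cut square whose descendant set $D$ has size less than $P$—can be absorbed into a chunk using $O(P)$ moves, all of which are performed by the squares in $D$, with the final position of $s$ remaining inside the bounding box $B$.

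My plan is to exploit the fact that $s$ lies on the boundary of $\mathcal{C}$ and that the descendant squares $D$ form a small connected subconfiguration (possibly up to three components) hanging off $s$. The key idea is that since $s$ is a cut square that is \emph{not} already part of a chunk containing these descendants via a cycle, we can use the squares in $D$ to build such a cycle. Concretely, I would walk along the boundary of $\mathcal{C}$ starting from $s$ and identify a target pocket or concavity near $s$ where a short path of squares can be laid down to close a simple cycle of length at least $4$ through $s$. Because $|D| < P$, every square in $D$ needs to travel a distance of only $O(P)$ along the boundary to reach its destination, and there are fewer than $P$ such squares, but the lemma asks for $O(P)$ total moves, so I must be careful: I would move the descendant squares one at a time along the boundary toward $s$, each traveling $O(P)$ steps, but the bound is $O(P)$ moves \emph{total}, meaning the charging must be that the whole gathering near $s$ costs $O(P)$, not $O(P)$ per square.

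Reconsidering the charging, the claim ``$O(P)$ moves by squares in $D$'' most naturally means we only need to relocate enough squares from $D$ to form one short cycle through $s$, and each such relocation is $O(P)$ moves; since a cycle of length $4$ suffices and the ambient structure already provides most of the cycle via the existing edge-adjacencies at $s$, only a constant number of squares need to be moved, each over an $O(P)$-length boundary walk. So the first step is to analyze the local geometry at $s$: $s$ has a descendant component attached on one side and the rest of the configuration on the other. I would argue that by sliding a constant number of squares from $D$ around the boundary, filling the empty cells adjacent to $s$, we create a $4$-cycle (or longer) incorporating $s$, thereby placing $s$ into a chunk by definition. The main structural tool is a monotone boundary traversal that guarantees each moved square follows a lexicographically controlled path of length $O(P)$ while keeping the configuration edge-connected throughout, never leaving $B$ in its final resting place.

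The hardest part will be guaranteeing connectivity is preserved at every intermediate move and that the squares being relocated do not themselves disconnect $D$ prematurely. Since $D$ may consist of up to three components and $s$ is their only link to the root, I must move leaf squares of $D$ (degree-one in the edge-adjacency graph restricted to $D \cup \{s\}$) first, peeling them off in an order that maintains connectivity—a standard but delicate induction on $|D|$. I would also need to verify the free-space conditions for each slide and convex transition along the chosen boundary path, which is where the explicit case analysis on the compass-direction moves (\dirS, \dirW, \dirSW, etc.) enters. I expect the connectivity-preserving ordering and the free-space verification to be the technical crux, while the counting argument ($O(P)$ moves, final position inside $B$) follows routinely from $|D| < P$ and the boundary having length $O(P)$.
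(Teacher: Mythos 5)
Your starting point matches the paper's: there are at most two empty cells adjacent to~$s$ (the paper calls them $e_1$ and~$e_2$, chosen inside~$B$) whose filling closes a cycle through~$s$, so it suffices to bring squares from~$D$ into those cells. The gap is in how you get them there. You assert that a constant number of squares can each complete an $O(P)$-length walk along the boundary of $D \cup \{s\}$ to reach these cells, but a square walking along that boundary can be \emph{blocked}: a cell on its route may be occupied, either by a square of $\mathcal{C} \setminus (D \cup \{s\})$ that comes close to~$D$, or by another part of~$D$ itself (think of a spiral or a U-shape). Your proposal never addresses this, and it is exactly the crux of the paper's proof. There, when the walking square~$a$ gets stuck at the empty cell~$c$ just before the first blocking square~$b$, one distinguishes two cases: if $b \notin D$, then $a$ sitting at~$c$ joins $D$ to the rest of the configuration, so a chunk containing~$s$ is created immediately and we are done; if $b \in D$, then $a$ closes a cycle inside~$D$, the cells it traversed become sealed inside a hole of this new chunk, and the procedure restarts with a fresh stable square~$a'$. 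The $O(P)$ bound is therefore \emph{not} ``constantly many squares, each moving $O(P)$ steps'': potentially many squares each move only partway, and the count is controlled by the fact that the paths of successive movers are pairwise disjoint and all lie among the $O(|D|) = O(P)$ empty cells adjacent to the descendant set. Without this stuck-square case analysis and the disjoint-path charging argument, your proof does not go through.

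A secondary issue: you propose peeling off degree-one squares of~$D$ by induction, but $D$ can have components with no degree-one squares at all (for instance, components that are chunks). The paper instead moves \emph{stable} (non-cut) squares, which always exist: in a leaf link component the degree-one node is stable, and in a leaf chunk component an extremal square in one of the \dirNE, \dirNW, \dirSE, or \dirSW directions is stable (at most one of these can be the connector). This choice, rather than leaf-peeling, is what guarantees that removing the walking square never disconnects the configuration.
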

\begin{proof}
    Observe that there are at most two empty cells $e_1$ (and $e_2$) neighboring $s$, such that moving squares there results in a chunk component containing $s$. Cell $e_1$ (and $e_2$) can be chosen such that they lie inside bounding box $B$: these cells must exist since always at least three neighboring cells are inside $B$, unless $B$ is a single row/column and the configuration was already $xy$-monotone. If such cells are already occupied by squares then $s$ is already in a chunk.
    We argue that we can move squares from the descendant components of $s$ into these empty cells inside the bounding box with at most $O(P)$ moves.
    Once this is accomplished, we repeat the process in the descendant components, for the next light square of maximal capacity, until no light squares remain in the component tree below the chunk containing $s$.
    
    Let $D' \subseteq D$ be a subset of boundary squares in the descendants of $s$ of the subconfiguration $D \cup \{s\}$.
    Select an arbitrary stable square $a \in D'$.
    Such a square exists because of the following: if there is a link component in $D$ that is a leaf in the component tree, then its degree-1 node is stable; and if there is a chunk component in $D$ that is a leaf in the component tree, then an extremal square of the chunk in one of the \dirNE, \dirNW, \dirSE, or \dirSW directions is stable (only one of them can be a connector square).
    
    Consider moving $a$ along the boundary of $D$ towards $e_1$.
    Let $E_a$ be the set of cells that $a$ needs to cross to reach $e_1$.
    If $E_a$ is empty, then we simply move $a$ to $e_1$ (see Figure~\ref{fig:gathering}a), and repeat the procedure for $e_2$ (if it exists).
    In this case, $a$ takes $O(P)$ moves to get to $e_1$.
    
    Now consider the case where $E_a$ is not empty.
    Let $b$ be the first square in $E_a$ on the way from $a$ to $e_1$; let $c$ be the square in $E_a$ that is just before and edge-adjacent to $b$.
    As $b$ is not part of the boundary along which the path from $a$ to $e_1$ is considered, it must be vertex-adjacent to a square that is on that part of the boundary.
    
\begin{figure}[b]
    \centering
    \includegraphics[page=1]{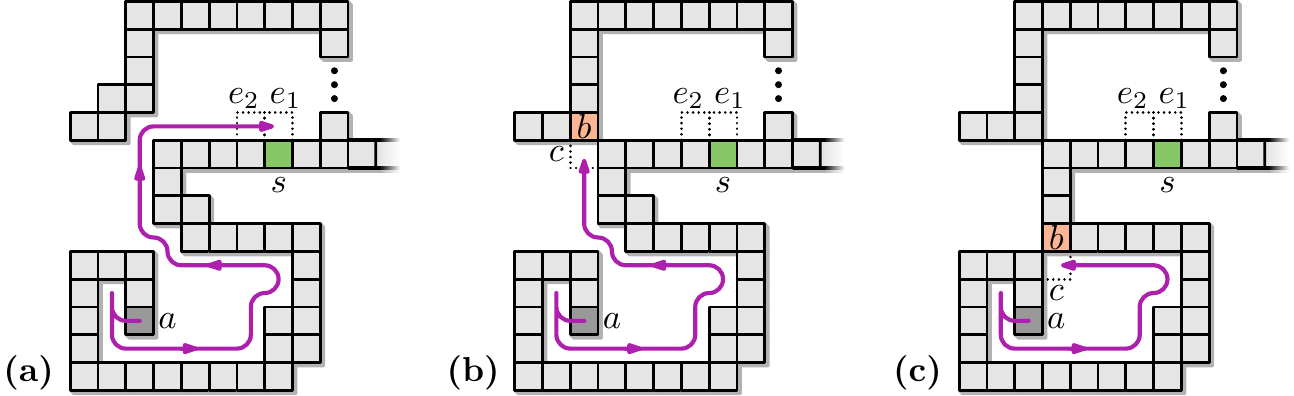}
    \caption{Light square $s$ (green); filling cells $e_1$ and $e_2$ makes $s$ part of a chunk; a stable square $a$ (dark grey) moves towards $e_1$ along the boundary. \textbf{(a)} $a$ reaches $e_1$. \textbf{(b)} square $b$ (brown) part of a component outside of $D$, moving $a$ to $c$ creates a chunk containing $s$. \textbf{(c)} square~$b$ part of a component in $D$, moving $a$ to $c$ creates a hole; its inner boundary will not be traversed again.}
    \label{fig:gathering}   
\end{figure}
    
    There can be two cases: either $b\not\in D$ or $b\in D$.
    In the first case, moving $a$ to $c$ merges a component in $D$ with some component outside of $D$ (see Figure~\ref{fig:gathering}b).
    Thus, a chunk is created that contains $s$, resulting in $s$ no longer being a light square.
    
    In the second case, when $b\in D$, moving $a$ to $c$ creates a chunk within $D$ (see Figure~\ref{fig:gathering}c).
    In this case we select another arbitrary stable square $a'$ in the new subconfiguration $D\cup \{s,c\} \setminus a$, and repeat the procedure.
    Observe that the empty squares traversed by $a$ are now part of a hole in the new chunk.
    Thus the path from $a'$ to $e_1$ does not overlap with the path taken by $a$.
    Let $\{a_0,a_1,a_2,\dots\}$ be the sequence of such stable squares chosen by our algorithm as candidates to be moved to $e_1$.
    For any square $a_i$, its path along the boundary to $e_1$ does not overlap with any of the cells traversed by all $a_j$ with $j<i$.
    Thus, there is some $k$ such that $a_k$ either reaches $e_1$, or it merges the components within and outside of $D$ into a chunk containing $s$.
    The total number of empty cells traversed by all squares $a_i$ ($0\le i\le k$) is~$O(P)$.
    
    We repeat the above procedure to fill~$e_2$. Note that the path taken by~$a$ (and~$a'$) may not always be inside bounding box~$B$. When this path exits~$B$, it will always stay adjacent to cells in $\mathcal{C}$, and hence it will use only the single row/column of cells adjacent to~$B$.
\end{proof}
Using the procedure described in the lemma above, we can iteratively reduce the number of light squares and obtain a component tree where all leaves are chunks of size at least $P$.

\begin{restatable}{lemma}{gatheringLemma}
    \label{lem:gathering}
    There is a sequence of $O(Pn)$ moves which ensures that all leaves in the component tree are chunks of size at least $P$.
\end{restatable}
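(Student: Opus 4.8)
The plan is to apply Lemma~\ref{lem:light-square} repeatedly, each time resolving a single light square, and to continue until the component tree~$T$ has no light square below any of its leaves. I would argue that this terminal state is exactly the one we want, and that it is reached in $O(Pn)$ moves in total. Concretely, while a light square exists I pick one (for definiteness, one of maximal capacity, matching the recursion described after Lemma~\ref{lem:light-square}) and make it part of a chunk. By Lemma~\ref{lem:light-square} each such application costs $O(P)$ moves, is performed only by the descendant squares $D$ with $|D| < P$, and leaves every moved square inside~$B$. The structural effect of one application is that the descendant content of~$s$ is merged into a chunk: in the link case the squares around~$s$ become chunk squares, and in the connector case the descendant component (a link or a chunk) is joined to the chunk on the root side of~$s$.

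The core of the proof is the move count, for which I would use a charging argument based on \emph{structural events} rather than on counting applications directly. Two observations drive it. First, during gathering a chunk is never destroyed and a square that once belongs to a chunk belongs to a chunk ever after, since we only add squares and close cycles; hence the event ``a square becomes a chunk square for the first time'' occurs at most $n$ times in total. Second, the number of chunks can increase only when such a first-time conversion creates a new chunk, so the number of ``new chunk'' events is $O(n)$, and therefore the number of ``two chunks merge'' events (each of which decreases the chunk count) is also $O(n)$. Every application of Lemma~\ref{lem:light-square} triggers at least one event of these two kinds, and the events of distinct applications are distinct, so the total number of applications is bounded by the total number of events, which is $O(n)$. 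Multiplying by the $O(P)$ cost per application yields $O(Pn)$ moves.

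It then remains to verify the terminal property. When no light square is left, take any leaf $\ell$ of~$T$. The cut square or connector~$s$ attaching $\ell$ to its parent has capacity equal to the number of squares in~$\ell$; if this were below $P$ then $s$ would be light and would have been processed, so $\ell$ contains at least $P$ squares. Moreover $\ell$ cannot be a link: a leaf link always contains a light cut square near its degree-1 tip (of capacity far below $P$), again contradicting termination. Hence every leaf is a chunk of size at least $P$, as required. (The degenerate regime $n < P$, in which no heavy chunk can be assembled at all, is handled separately.)

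The step I expect to be the main obstacle is making this counting fully rigorous in the presence of the recursion \emph{inside} Lemma~\ref{lem:light-square}. A single application need not reduce the number of tree vertices by one: through its case-(b) steps it may create several auxiliary chunks within~$D$ before the final cycle through~$s$ is closed, so a naive ``one merge per application'' accounting fails. The delicate point is precisely that all of these intermediate changes are first-time-conversion or merge events of the two kinds counted globally above, and that they are all already paid for within the $O(P)$ moves charged to that one application; one must also confirm that the chosen processing order never selects the same square as the resolved square~$s$ twice, so that progress is genuine and no application is wasted.
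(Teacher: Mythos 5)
Your overall strategy matches the paper's (iterate Lemma~\ref{lem:light-square} on a maximal-capacity light square at $O(P)$ moves per application, bound the number of applications by $O(n)$, then verify the terminal state), but your counting argument rests on a false premise. You claim that during gathering ``a chunk is never destroyed and a square that once belongs to a chunk belongs to a chunk ever after, since we only add squares and close cycles.'' This is not true: the procedure inside Lemma~\ref{lem:light-square} repeatedly moves a \emph{stable} square $a$ chosen from a leaf component of~$D$, and that square may be an extremal square of a leaf \emph{chunk} (this is exactly how the paper guarantees a stable square exists when $D$ has no link leaf). Moving such a square can break the boundary cycle of its chunk --- for a $2 \times 2$ chunk, or a minimal ring of squares around a one-cell hole, removing one square destroys the chunk outright --- and the paper's own proof explicitly allows for this (``this process creates new light squares only if removal of a square~$a$ breaks a cycle in~$D$''). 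Once chunk membership is not monotone, both of your event counts collapse: a ``new chunk'' event can be produced entirely by squares that were already chunk squares earlier (a chunk destroyed and later re-formed), so it is not paid for by any first-time conversion, and consequently neither the number of chunk creations nor the number of merges is bounded by $O(n)$. The heart of the lemma --- the $O(n)$ bound on the number of applications --- is therefore unsupported.

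Notably, the fact you relegate to a final ``side check'' --- that the same square is never selected as the resolved light square twice --- is precisely the paper's entire counting argument: new light squares arise only inside~$D$ (exactly when moving $a$ breaks a cycle), and, processing light squares in order of maximal capacity, every square can be light at most once during the whole process; hence there are at most $n$ applications, each costing $O(P)$ moves by Lemma~\ref{lem:light-square}. Your terminal-state argument (if no light square remains, every leaf of~$T$ is a chunk of size at least~$P$, and a leaf link would still contain a light square) is correct and in fact more explicit than the paper's. To repair the proof, drop the chunk-event charging and instead prove and use the light-at-most-once claim, charging each application to the distinct light square it resolves.
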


\subsection{Compacting}
\label{sec:compaction}

After the gathering phase, each leaf of the component tree~$T$ is a heavy chunk, that is, each leaf is a chunk of size at least~$P$. In this section, we describe how we compact the configuration in order to turn it into a \emph{left-aligned histogram} containing the origin. A left aligned histogram has a vertical base, and extends only rightward. Our procedure uses three types of moves: LM-moves, corner moves, and chain moves, which we discuss below. We iteratively apply any of these moves. The correctness of the algorithm does not depend on the order in which moves are executed, as long as the moves are \emph{valid} (defined below). Our implementation assigns priorities to the various types of moves and chooses according to these priorities whenever multiple moves are possible (see Section~\ref{sec:experiments}).

\subparagraph*{LM-moves.}
We say that a move is a \emph{lexicographically monotone move} (\emph{LM-move} for short) if it is either an \dirS- or \dirW-move (slides), or an \dirSW-, \dirWS-, \dirNW-, or \dirWN-move (convex transitions). Note that an LM-move will never move a square to the east. Squares can move to the north, but only when they also move to the west. Hence, if a square starts at coordinate $(x, y)$, and we perform a series of LM-moves, it stays in the region $\{ (x', y') \mid x' \leq x \wedge y' \leq x - x' + y \}$.


Let~$s$ be a square in a heavy leaf chunk~$C$ of~$\mathcal{C}$, and consider an LM-move made by~$s$. We say that this move is \emph{valid} if $s$ stays inside bounding box~$B$, and all squares $s' \in C$ are still in a single chunk after the move. While compacting, we allow only valid LM-moves, that is, we allow each chunk to grow, but a chunk can never lose any squares.



\subparagraph*{Corner moves.}
LM-moves on their own are not necessarily sufficient to compact a chunk into a suitable left-aligned shape. For example, consider the configuration in Figure~\ref{fig:mirrored-gamma}a, which does not admit any valid LM-moves. However, it has a concave corner that we can fill with two moves (see Figure~\ref{fig:mirrored-gamma}b), to expand the chunk in that direction. Repeating such corner moves allows us to make the chunk in the example left-aligned.

\begin{figure}[b]
    \centering
    \includegraphics{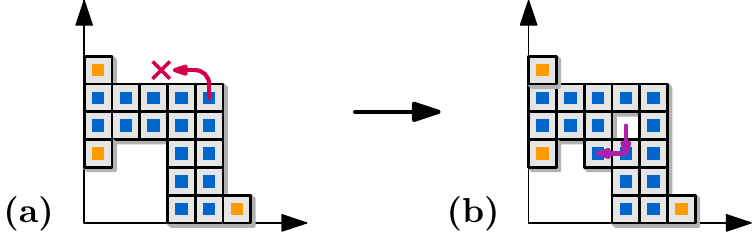}
    \caption{\textbf{(a)} A configuration that does not admit LM-moves. For example, an \dirNW-move (in red) of the top-right square is not valid. \textbf{(b)} Two slide moves expand the concave corner in \dirSW direction.}
    \label{fig:mirrored-gamma}
\end{figure}

We define corners of a chunk~$C$ with boundary cycle~$\sigma$ as follows. A \emph{top corner} (Figure~\ref{fig:corner-moves}a--d) is an empty cell with squares $b_1, b_2, b_3 \in \sigma$ as \dirN-, \dirNE-, and \dirE-neighbors. Similarly, a \emph{bottom corner} (Figure~\ref{fig:corner-moves}e--h) is an empty cell with squares $b_1, b_2, b_3 \in \sigma$ as \dirS-, \dirSE-, and \dirE-neighbors. Note that a corner can be either inside a hole in~$C$ (\emph{internal corner}), or on the outside of~$C$ (\emph{external corner}), and we treat both of these in the same way.

\begin{figure}[h]
    \centering
    \includegraphics{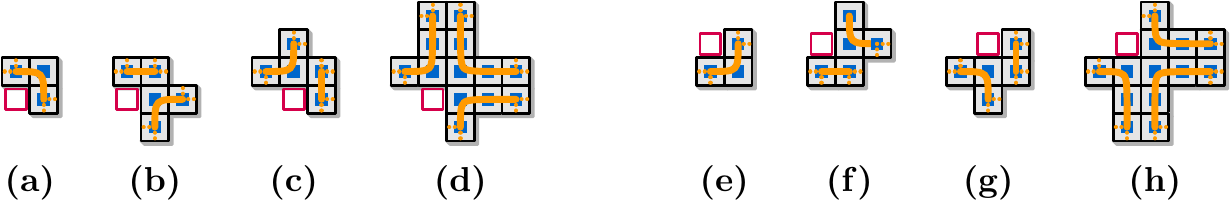}
    \caption{Empty squares shown in red are corners. The boundary cycle of the chunk is shown in orange. \textbf{(a)--(d)} Top corners; \textbf{(e)--(h)} bottom corners. Corners shown in \textbf{(a)} and \textbf{(e)} can be both external and internal. Corners shown in \textbf{(b)}--\textbf{(d)} and \textbf{(f)}--\textbf{(h)} can only be internal.}
    \label{fig:corner-moves}
\end{figure}
Let~$s$ be a top corner in~$C$ with neighbors $b_1, b_2, b_3$ as above. In the case where $b_1, b_2, b_3$ are consecutive squares in $\sigma$ (Figure~\ref{fig:corner-moves}a), we can fill~$s$ by two slide moves: first move either $b_1$ or~$b_3$ into~$s$, and then move~$b_2$ into the cell left empty by the first move. We call this a \emph{top corner move}. We can fill a bottom corner with consecutive $b_1, b_2, b_3$ (Figure~\ref{fig:corner-moves}e) in the same way, just mirrored vertically (a \emph{bottom corner move}). Just like for LM-moves, we say that a corner move is \emph{valid} if all squares $s' \in C$ are still in a single chunk after the corner move. Note that all corners where $b_1, b_2, b_3$ are not consecutive in $\sigma$ (Figure~\ref{fig:corner-moves}b--d and \ref{fig:corner-moves}f--h) do not allow valid corner moves, as $b_1$ and/or $b_2$ becomes a connector. 

\subparagraph*{Chain moves.}
Besides LM- and corner moves, we need a special move to prevent getting stuck when each LM-move is invalid because it would move outside the bounding box~$B$. For example, some squares on the bottom row or leftmost column of~$B$ would be able to perform LM-moves if they were situated in any other row/column of~$B$, as shown in Figure~\ref{fig:chain-move}.

\begin{figure}[h]
    \centering
    \includegraphics{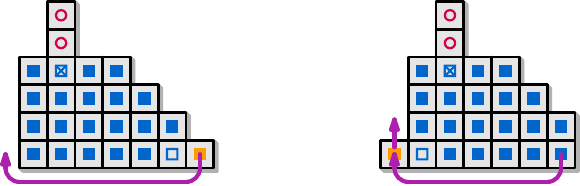}
    \caption{Examples of chain moves: a square moves outside $B$ to the first empty cell closer to the origin. This may require a loose square to move as well.}
    \label{fig:chain-move}
\end{figure}

A \emph{chain move} is a series of moves that is started by such an LM-move that violates validity only by leaving bounding box~$B$. A chain move for a square~$s$ in the bottom row of~$B$ requires an empty cell $e = (x, 0)$ closer to the origin, and works as follows. Square~$s$ must be able to perform an LM-move, more precisely an \dirSW-move that is invalid only because it leaves~$B$. We want to place~$s$ in this empty cell~$e$, unless it creates a link component, which happens only if the square on position $(x+1,0)$ is a loose square. We slide such a loose square upwards with a \dirN-move, and identify the emptied cell as~$e$. Note that $e$ is again the closest empty cell in the bottom row, closer to the origin. We can then move~$s$ to~$e$ by performing an \dirSW-move, a series of \dirW-moves, and finally a \dirWN-move into~$e$. For a square~$s$ in the leftmost column, the direction of all moves is mirrored in $x=y$.

\begin{figure}[t]
    \centering
    \includegraphics{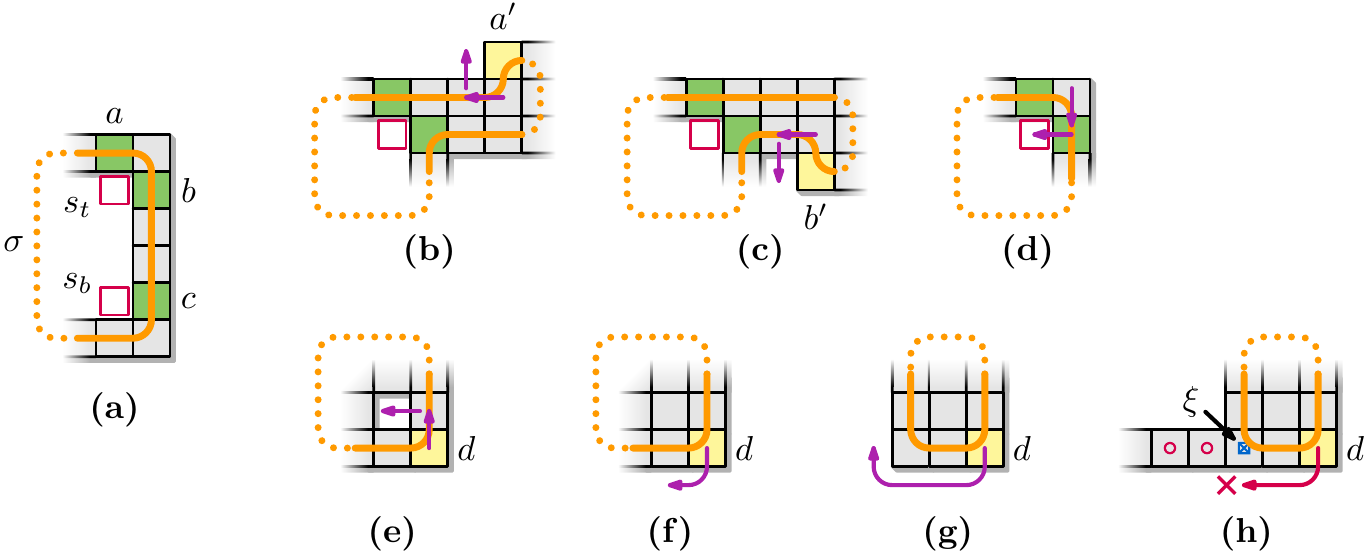}
    \caption{Lemma~\ref{lem:histogram}: \textbf{(a)} A chunk hole with $s_t$ and~$s_b$ marked. \textbf{(b)}--\textbf{(d)} Possible valid moves for~$\sigma'$.  \textbf{(e)}--\textbf{(g)}
    Possible valid moves for~$\sigma''$.
    \textbf{(h)} If a chain move is not possible then $\xi$ is left of $d$.} \label{fig:corner-moves-type-b}
\end{figure}

\begin{restatable}{lemma}{histogram}
    \label{lem:histogram}
    Let~$C$ be a leaf chunk that does not admit valid LM-moves, corner moves, or chain moves. Then $C$ is solid, and its boundary cycle $\sigma$ outlines a left-aligned histogram.
\end{restatable}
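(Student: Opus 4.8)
The plan is to argue by contraposition: I will show that if $C$ is fragile, or if its boundary cycle $\sigma$ does not outline a left-aligned histogram, then $C$ admits at least one valid move of the three types, contradicting the hypothesis. Throughout, ``valid'' carries the two conditions introduced above — the move must keep all of $C$ in a single chunk, and (for LM- and corner moves) it must stay inside the bounding box $B$, with chain moves reserved precisely for the situations in which the only obstruction is leaving $B$.

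First I would establish solidity. Suppose for contradiction that $C$ encloses a hole $H$. Let $p_t$ be the rightmost empty cell in the topmost row of $H$ and $p_b$ the rightmost empty cell in the bottommost row of $H$. Using that a hole is a maximal vertex-connected set of empty cells — so any empty cell edge- or vertex-adjacent to a cell of $H$ again lies in $H$ — I would check that the \dirN-, \dirNE-, and \dirE-neighbors of $p_t$ are all squares; hence $p_t$ is a top corner, and symmetrically $p_b$ is a bottom corner (Figure~\ref{fig:corner-moves-type-b}a, with $s_t,s_b$ the incident boundary squares). If the corner move at $p_t$ or at $p_b$ is valid we are done. Otherwise both are blocked, which by the corner classification means the incident boundary squares are not consecutive along $\sigma$, so the hole boundary is locally one of the non-fillable patterns of Figure~\ref{fig:corner-moves}. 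I would then walk the upper boundary path $\sigma'$ of $H$ from $p_t$ and argue that a reflex vertex must occur at which a valid LM-move is available (the cases in Figure~\ref{fig:corner-moves-type-b}b--d), and symmetrically along the lower path $\sigma''$ (Figure~\ref{fig:corner-moves-type-b}e--g). If the square to be moved lies on the bottom row or leftmost column of $B$, so the LM-move would exit $B$, a chain move applies instead; the one configuration in which even a chain move appears blocked is ruled out by the position argument of Figure~\ref{fig:corner-moves-type-b}h ($\xi$ lies left of $d$). In every case a valid move exists, contradicting the hypothesis, so $C$ is solid.

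Next, assuming $C$ solid, I would prove that its boundary is a left-aligned histogram, which here means exactly that every occupied row is a contiguous run of cells beginning at the common leftmost column $x_{\min}$; connectivity of $C$ then forces the occupied rows to form a contiguous interval of $y$-values and column $x_{\min}$ to be the full vertical base. Solidity puts every empty cell adjacent to $C$ on the outside, so any deviation from this shape — an internal gap inside a row, or a row whose leftmost square lies strictly east of $x_{\min}$ — shows up as a reflex corner of $\sigma$. At such a corner the incident square can perform a valid LM-move toward the missing cell (an appropriate \dirW-slide or \dirWS-/\dirNW-type transition, supported by a neighboring row), which keeps $C$ in a single chunk because $C$ is solid and is only ever blocked by $B$, in which case the matching chain move applies (Figure~\ref{fig:chain-move}). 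Since the hypothesis forbids all LM-, corner-, and chain moves, no such deviation can occur, and $C$ is precisely a left-aligned histogram.

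The hard part will be the exhaustive local case analysis underlying both steps: verifying that in every local boundary configuration where a corner move is blocked (because it would turn an incident square into a connector), some LM-move or chain move is nonetheless valid, and in particular that this move does not itself split $C$ or create a connector elsewhere. Correctly distinguishing internal from external corners, tracking consecutiveness along $\sigma$, and checking that the chain-move exceptions cover exactly the bounding-box-boundary cases — the content of Figures~\ref{fig:corner-moves} and~\ref{fig:corner-moves-type-b} — is where essentially all of the work lies; the global structural conclusion then follows almost formally from the non-existence of these moves.
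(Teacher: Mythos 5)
Your overall skeleton---argue by contradiction, locate corners inside a putative hole, walk the two boundary arcs, invoke chain moves when the bounding box is the only obstruction, and then eliminate external corners to get the histogram---matches the paper's proof. But there is a genuine gap at the heart of the solidity argument. You assert that along each of the two arcs ($\sigma'$ above, $\sigma''$ below) ``a reflex vertex must occur at which a valid LM-move is available,'' i.e., that each arc \emph{separately} yields a valid move. That is false: the connector $\xi$ of $C$ can sit on one of those arcs and invalidate every candidate move there (any move of $\xi$, or of a square shielding $\xi$, would let the chunk lose squares, which validity forbids). The paper's case analysis therefore proves only a conditional statement for each arc: either some valid LM-, corner-, or chain move exists, or that arc contains $\xi$. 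The contradiction is then \emph{global}, and hinges on the fact that $C$ is a leaf chunk and hence has at most one connector, while the two arcs $\sigma'$ (strictly between $a$ and $b$) and $\sigma''$ (strictly between $c$ and $a$) are disjoint---so they cannot both contain $\xi$. You mention $\xi$ only parenthetically in one chain-move subcase (Figure~\ref{fig:corner-moves-type-b}h); without making ``every blocked move forces the location of the unique connector'' the organizing principle, the exhaustive local case analysis you defer cannot be completed, because in some of those local configurations no valid move exists at all.

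A secondary deviation would also bite: you pick $p_t$ and $p_b$ as the rightmost empty cells of the topmost and bottommost rows of the hole, whereas the paper takes the top- and bottommost cells $s_t$, $s_b$ of the rightmost \emph{column} of empty cells of any hole. The paper's choice guarantees that the part of $C$ east of that column is hole-free, which is exactly the certificate used to show that the corner moves discovered along $\sigma'$ are valid (they cannot disconnect anything to their right). With your choice of extremal cells this certificate is unavailable, so even the cases in which a move does exist become harder to verify. The second half of your argument (solidity plus absence of external corners, LM-moves, and chain moves implies a left-aligned histogram) is essentially the paper's and is fine.
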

\begin{proof}[Proof sketch]
    To show that $C$ is solid, assume to the contrary that $C$ has a hole. Consider the top- and bottommost empty squares $s_t$ and $s_b$ of the rightmost column of empty squares of any hole in~$C$ ($s_t$ may be equal to~$s_b$). Let $a$ and~$b$ be the \dirN- and \dirE-neighbors of~$s_t$, respectively, and let $c$ be the \dirE-neighbor of~$s_b$ (see Figure~\ref{fig:corner-moves-type-b}a). We know that $a, b, c \in \sigma$, because otherwise moving $a$ or~$b$ into~$s_t$, or $c$ into~$s_b$, would be a valid LM-move. $C$ has at most one connector~$\xi$, which is part of~$\sigma$. We can now show, using a case distinction, that $\xi$ lies strictly between $a$ and~$b$ on~$\sigma$ (see Figure~\ref{fig:corner-moves-type-b}b--d), and also that $\xi$ lies strictly between $c$ and~$a$ (see Figure~\ref{fig:corner-moves-type-b}e--h). As these parts of~$\sigma$ are disjoint, they cannot both contain~$\xi$, which results in a contradiction. Hence, $C$ is solid. To show that $\sigma$ outlines a left-aligned histogram, we observe that any external corner admits a valid corner move. Therefore, such external corners cannot exist and thus $\sigma$ outlines a left-aligned histogram (see Figure~\ref{fig:hist-chunk}).
\end{proof}

\begin{figure}
    \begin{minipage}[t]{0.48\textwidth}
        \centering
        \includegraphics{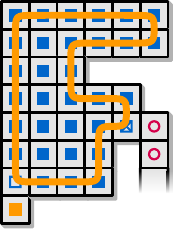}
        \caption{The boundary cycle~$\sigma$ of the leaf chunk $C$ outlines a left-aligned histogram.}
        \label{fig:hist-chunk}
    \end{minipage}
    \hfill
    \begin{minipage}[t]{0.48\textwidth}
        \centering
        \includegraphics{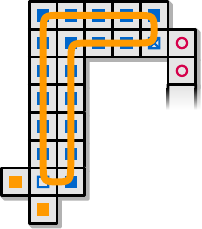}
        \caption{The chunk~$C$ forms a double-$\Gamma$ shape, with one or two loose squares (orange).}
        \label{fig:L-chunk}
    \end{minipage}
\end{figure}

\begin{restatable}{lemma}{doubleGamma}
    \label{lem:gamma}
    Let the component tree~$T$ have more than one leaf, and let~$C$ be a leaf chunk that does not contain the origin and does not admit valid LM-moves, corner moves, or chain moves. Then the boundary cycle of~$C$ outlines a double-$\Gamma$ (see Figure~\ref{fig:L-chunk}).
\end{restatable}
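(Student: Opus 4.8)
The starting point is Lemma~\ref{lem:histogram}: since $C$ admits no valid LM-moves, corner moves, or chain moves, $C$ is already solid and its boundary cycle~$\sigma$ outlines a left-aligned histogram. So the plan is not to re-derive solidity or the histogram profile, but to use the two extra hypotheses --- that $T$ has more than one leaf and that $C$ does not contain the origin --- to collapse a general left-aligned histogram into the double-$\Gamma$ of Figure~\ref{fig:L-chunk}. The first ingredient I would record is that, exactly as in the proof of Lemma~\ref{lem:histogram}, $C$ has a \emph{unique} connector $\xi \in \sigma$: because $C$ is a leaf of $T$ and $T$ has another leaf, $C$ is not the root component, so it is joined to its parent through a single cut square~$\xi$, and $\xi$ is the only square of~$C$ edge-adjacent to (or shared with) squares outside~$C$.

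The crux is to explain why a left-aligned histogram that has been compacted but does not reach the origin must be pinned into an arm by~$\xi$. Here I would argue as follows. Consider the lower-left part of the histogram. If the histogram could still be moved toward the origin, then some square on its lower-left staircase would admit an \dirS-, \dirW-, \dirSW-, or \dirWS-move, or --- if it is blocked only by the boundary of~$B$ --- a chain move; all of these are excluded by hypothesis. The remaining possibility is that the lower-left of~$C$ is blocked by occupied cells that do \emph{not} belong to~$C$. Since $\xi$ is the only square of~$C$ adjacent to other components, these obstructing squares sit against~$C$ precisely at~$\xi$, so the parent component occupies the region between~$C$ and the origin and $\xi$ anchors $C$ to it. I would then turn the absence of corner and LM-moves into local shape constraints: around~$\xi$ the histogram cannot extend further toward the origin (the parent is there), while away from~$\xi$ every concave feature of~$\sigma$ would be an internal or external corner, hence a corner move, and every reflex step on the free part of the boundary would admit an LM-move --- each a contradiction.

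Putting these constraints together, I would carry out a case analysis on the position of~$\xi$ along~$\sigma$ relative to the lowest and leftmost columns of the histogram, showing in each case that the only histogram compatible with ``no LM-move, no corner move, no chain move, pinned at~$\xi$, origin not contained'' is the double-$\Gamma$, with the one or two squares that cannot be absorbed into the boundary cycle appearing as the loose squares of Figure~\ref{fig:L-chunk}. The main obstacle I anticipate is precisely this case analysis: it must be genuinely exhaustive, and the delicate part is cleanly separating the two reasons a square can fail to move --- being blocked by the boundary of~$B$ (which a chain move would resolve) versus being blocked by the parent at~$\xi$ --- and then verifying that in every non-double-$\Gamma$ profile at least one LM-move, corner move, or chain move survives. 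To keep the number of cases manageable, I would reuse the $\xi$-localization technique already employed in Lemma~\ref{lem:histogram} (confining~$\xi$ to two disjoint arcs of~$\sigma$) so that most candidate profiles are ruled out by the same argument.
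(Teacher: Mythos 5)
Your opening moves match the paper's proof exactly: invoke Lemma~\ref{lem:histogram} to get a solid left-aligned histogram, and use the fact that $C$ is a leaf of a tree with more than one leaf to get a unique connector~$\xi$ on~$\sigma$. The problem is that everything after this point is deferred, and what you defer is the entire proof. The paper does not run an open-ended case analysis over histogram profiles; it exhibits two concrete witness squares. Let $r_1, r_2, \ldots$ be the rows of the region outlined by~$\sigma$, from top to bottom. First, if $\xi$ lay in a row~$r_i$ with $i \geq 3$, then the leftmost square of the top row~$r_1$ would admit a valid \dirWS-move or a vertical chain move; making this work requires a separate argument that no loose square of~$C$ can block the move (the leftmost column can carry at most one loose square, in the last row, and the top row none, since such loose squares would themselves admit valid moves). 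Hence $\xi$ lies in $r_1$ or~$r_2$, and 2-connectivity of the outlined region then forces $|r_1| = |r_2|$ --- the horizontal leg. Second, if some row~$r_k$ with $k \geq 3$ had $|r_k| > 2$, then the rightmost square of the last such row would admit a valid \dirS-/\dirSW-move or a horizontal chain move; hence every row below~$r_2$ has width exactly two (at least two by 2-connectivity) --- the vertical leg. Your proposal never identifies these two witnesses and never confronts the loose-square blocking issue; flagging the exhaustive analysis as ``the main obstacle I anticipate'' concedes precisely the part that constitutes the proof.

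Moreover, the mechanism you propose for organizing that analysis is not the right one. Your trichotomy --- a lower-left square can move, or it is blocked only by the boundary of~$B$ (whence a chain move), or it is blocked by occupied cells outside~$C$ that must press against~$C$ at~$\xi$ --- omits the reasons a move actually fails in this setting: the moving square may be $\xi$ itself; the move may be invalid because afterwards some square of~$C$ would no longer lie in a single chunk (validity is chunk preservation, not absence of physical obstruction); and an \dirS-, \dirW-, \dirSW-, or \dirWS-move may simply be geometrically infeasible because slides and convex transitions require supporting/pivot squares and an empty intermediate cell. In particular, your structural conclusion that ``the parent component occupies the region between $C$ and the origin and $\xi$ anchors $C$ to it'' is not forced by the hypotheses, and it is not where the paper's argument places the connector: there, $\xi$ is pinned to the \emph{top two rows} of the histogram, which is what creates the horizontal leg of the double-$\Gamma$. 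A case analysis organized around your picture would therefore start from a false invariant. Your one sound instinct --- reusing the $\xi$-localization idea from Lemma~\ref{lem:histogram} --- is indeed what the paper does, but it localizes $\xi$ to the top of the histogram, not to its lower-left.
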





\begin{lemma}\label{lem:xy-monotone}
Configuration $\mathcal{C}$ stays in its original bounding box and is $xy$-monotone, once compaction is completed and no valid LM-, corner, and chain moves are possible.
\end{lemma}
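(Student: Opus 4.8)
The plan is to combine the two structural lemmas about stuck leaf chunks with the observation that every move used during compaction respects the original bounding box. I would split the statement into its two assertions and treat them separately: containment in~$B$, which is essentially by construction, and $xy$-monotonicity, which is the substantive part.

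For containment in the original bounding box~$B$, I would argue directly from the definitions of the three move types. A valid LM-move and a valid corner move are, by definition, performed only when the moved square remains inside~$B$, so neither can enlarge the bounding box; the only moves that ever leave~$B$ are chain moves, and a chain move sends its square outside~$B$ only to reinsert it into the same bottom row or leftmost column of~$B$ closer to the origin. Since no move is ever lexicographically increasing, the bounding box is never enlarged, so the configuration stays inside the original~$B$ throughout and in particular once compaction halts.

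For $xy$-monotonicity I would first show that the halted configuration is a single chunk containing the origin. By Lemma~\ref{lem:histogram}, every leaf chunk of the final component tree is solid and outlines a left-aligned histogram, and in particular has no loose squares. Suppose the component tree had more than one leaf. Since at most one chunk can contain the origin, some leaf chunk~$C$ would be origin-free, so Lemma~\ref{lem:gamma} would force~$C$ to outline a double-$\Gamma$ with one or two loose squares, contradicting the solidity guaranteed by Lemma~\ref{lem:histogram}. Hence the tree has a single leaf. It remains to rule out a single \emph{origin-free} leaf chunk: a heavy leaf chunk that does not yet contain the origin still has enough squares to extend its boundary cycle toward the origin, so it would still admit a valid LM-, corner-, or chain-move, contradicting that compaction has halted. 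Thus the whole configuration is a single solid left-aligned histogram whose base sits on the origin.

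Finally I would translate ``no valid LM-, corner-, or chain-move'' into the monotone staircase shape. The driving observations are that the unavailability of a valid \dirW-move forces every occupied row to be left-packed (a prefix of columns starting at column~$0$), the unavailability of a valid \dirS-move forces every occupied column to be bottom-packed (a prefix of rows starting at row~$0$), the absence of valid corner moves eliminates the remaining concave boundary defects, and the absence of valid chain moves handles the squares pinned against the bottom row and leftmost column of~$B$. Together with the chunk being a single solid region anchored at the origin, this makes the set of occupied cells an $xy$-monotone staircase, as claimed. I expect the main obstacle to be the middle step, namely cleanly excluding a stuck origin-free leaf chunk and thereby collapsing the component tree to a single origin-containing chunk, since this is exactly where the heaviness of leaf chunks and the double-$\Gamma$ versus solid-histogram dichotomy must be played off against each other; the concluding translation into a staircase is then a routine case analysis on the boundary cycle~$\sigma$.
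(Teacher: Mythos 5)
Your proof has a genuine gap at its central step. The contradiction you claim in the multi-leaf case does not exist: Lemma~\ref{lem:gamma} is a \emph{refinement} of Lemma~\ref{lem:histogram}, not in tension with it. A double-$\Gamma$ is solid (it encloses no hole) and is itself a left-aligned histogram (two equal-length top rows plus a width-two vertical leg, all rows flush left); moreover, both lemmas make claims only about the boundary cycle~$\sigma$, so the loose squares in Figure~\ref{fig:L-chunk} contradict neither -- Lemma~\ref{lem:histogram} does not say that a stuck leaf chunk has no loose squares. Hence ``double-$\Gamma$ with loose squares'' versus ``solid left-aligned histogram'' yields no contradiction, and your collapse of the component tree to a single leaf fails. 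The single-leaf case is then carried entirely by the unproven assertion that a heavy origin-free chunk ``still has enough squares to extend its boundary cycle toward the origin, so it would still admit a valid move''; that is precisely the statement requiring proof.

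What is missing is the quantitative use of heaviness, which is how the paper closes this step: if a stuck leaf chunk~$C$ does not contain the origin, then by Lemma~\ref{lem:gamma} its boundary cycle outlines a double-$\Gamma$, and all of~$C$ (including its connector) lies inside~$B$; but a double-$\Gamma$ confined to a box of perimeter~$P$ has at most two rows of length at most the box width plus a width-two leg of height at most the box height, i.e., fewer than~$P$ squares, unless it reaches the bottom-left corner of~$B$ -- contradicting that gathering made every leaf chunk have size at least~$P$. This counting argument cannot be bypassed: for \emph{light} chunks, stuck origin-free double-$\Gamma$s genuinely exist (this is exactly why the gathering phase is needed), so any argument like yours that never invokes the size bound would ``prove'' a false statement. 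Once each leaf chunk is forced to contain the origin, the leaves merge into a single chunk, Lemma~\ref{lem:histogram} gives a solid left-aligned histogram inside~$B$, and the paper finishes by observing that the topmost row longer than the row below it still admits a valid \dirS- or \dirSW-move, forcing non-increasing row lengths. Your concluding claims that the absence of a valid \dirW-move left-packs every row and the absence of a valid \dirS-move bottom-packs every column are also not correct as stated (a move can be unavailable for connectivity or validity reasons rather than occupancy), though since you invoke Lemma~\ref{lem:histogram} this last step is a minor repair; your bounding-box containment paragraph is fine.
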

\begin{proof}
In the compaction phase we iteratively apply LM-, corner, and chain moves on the configuration in which each leaf chunk contains at least $P$ squares. After compaction, a leaf chunk can either contain the origin, or not. Consider such a chunk $C$ that does not contain the origin. By Lemma~\ref{lem:gamma}, the cycle of $C$ will outline a double-$\Gamma$, and since $C$ is a leaf, it has at least $P$ squares. During gathering and compacting, squares always move to a cell inside the initial bounding box~$B$ of the configuration. Even if a square moves outside~$B$ during gathering or chain moves it always ends inside~$B$ (by Lemma~\ref{lem:light-square} and by definition, respectively). Hence the connector of $C$ will also be inside $B$. Since $P$ is the perimeter of~$B$, any double-$\Gamma$ of $P$ squares, completely inside $B$, will reach the bottom left corner of $B$. Thus, $C$ must contain the origin, as one of the top two rows connects to the connector inside~$B$. As a result, every leaf chunk contains the origin at some point during compaction.

Once every leaf chunk contains the origin, the whole configuration is one single chunk: all leaves of the component tree form a single component now. Continuing the compacting hence results in a left-aligned histogram, by Lemma~\ref{lem:histogram}. Finally consider the topmost row $r$ of this histogram that is longer than the row below it. During compaction, the rightmost cube of~$r$ can perform a valid LM-move, namely a \dirS- or \dirSW-move to the row below it. Note that these moves cannot put cubes outside of the original bounding box~$B$ of~$\mathcal{C}$. Thus, once the compacting phase is completed, the configuration is $xy$-monotone inside $B$.
\end{proof}

\begin{lemma}\label{lem:compaction}
    There is a sequence of $O(Pn)$ moves which reconfigures a configuration in which all leaves are chunks of size at least $P$ to an $xy$-monotone configuration.
\end{lemma}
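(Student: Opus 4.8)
The plan is to separate \emph{correctness} from the \emph{move count}, because the first is already established: Lemma~\ref{lem:xy-monotone} guarantees that once no valid LM-, corner-, or chain move remains, the configuration is $xy$-monotone and still lies inside the original bounding box~$B$. So the only content left for Lemma~\ref{lem:compaction} is that greedily applying valid moves until none remain both terminates and costs only $O(Pn)$ moves. I would prove this with a single potential. Since squares are unlabeled, work with the set of occupied cells and assign each cell $(x,y)$ (relative to the origin) the weight $w(x,y)=2x+y$; let $\Phi=\sum_{\text{occupied } c} w(c)$. As $B$ has width $w$ and height $h$ with $2(w+h)=P$, every cell has weight at most $2w+h\le P$, so $\Phi\le Pn$ at the start, and $\Phi\ge -O(1)$ throughout because the in-place property keeps at most one square outside $B$ at a time (at $x=-1$ or $y=-1$, contributing at least $-2$).

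The backbone is the effect of each elementary move on $\Phi$. Every LM-move lowers it: an \dirS-move by $1$, a \dirW-move by $2$, an \dirSW- or \dirWS-move by $3$, and the two transitions \dirNW and \dirWN by $2x$ minus $1$, i.e.\ a net $1$. Next I would check the compound moves. A top corner move (Figure~\ref{fig:corner-moves}a) is an \dirS-slide followed by a \dirW-slide (or vice versa), so both of its slides lower $\Phi$. A bottom corner move (Figure~\ref{fig:corner-moves}e) is a \dirW-slide (lowering $\Phi$ by $2$) together with an \dirN-slide of $b_2$ (raising $\Phi$ by $1$), for a net decrease of $1$. A chain move decomposes into an \dirSW-move out of $B$, a run of \dirW-moves at $y=-1$, and a \dirWN-move back into $B$, each of which lowers $\Phi$, together with an optional \dirN-slide of a loose square that raises it by $1$; the leftmost-column case is symmetric under reflection in $x=y$, contributing one re-entry move and one loose-square move that raise $\Phi$ by $O(1)$. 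Hence the \emph{only} $\Phi$-increasing elementary moves are these \dirN-slides (inside bottom corner moves and chain moves) and the mirrored re-entry slides of leftmost-column chains, each raising $\Phi$ by at most a constant.

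To finish I would bound the number $I$ of these exceptional increasing moves by $O(Pn)$, after which everything closes: the number of $\Phi$-decreasing moves is at most $\Phi_{\mathrm{init}}-\Phi_{\min}+O(I)=O(Pn)+O(I)$, and the total move count is (decreasing) $+\,I=O(Pn)+O(I)$. Each increasing move is chargeable to irreversible progress toward the origin. Each bottom corner move contains one genuine \dirW-slide, and since westward progress of the whole configuration totals $O(Pn)$, the number of bottom corner moves is $O(Pn)$. Each chain move advances its main square strictly toward the origin along its row (bottom-row case) or column (leftmost-column case): its coordinate in that direction can never rebound, so each square is the main square of at most $O(P)$ chains, and the slide-lengths of all chains telescope against the same westward/southward progress to sum to $O(Pn)$. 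Thus $I=O(Pn)$, giving $O(Pn)$ total moves and, together with Lemma~\ref{lem:xy-monotone}, the claimed $xy$-monotone outcome.

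The hard part will be the chain move, which is the single place where the process is not lexicographically monotone: the main square leaves $B$, re-enters closer to the origin, and may first push a loose square upward, so $\Phi$ genuinely increases on several of its sub-steps. The bound only closes because these increases are few, and proving that rests on the delicate claim that a chain's main square makes strictly monotone coordinate progress that no later move can undo. Making this airtight—especially reconciling the $x$-monotone bookkeeping for bottom-row chains with the mirrored $y$-monotone bookkeeping for leftmost-column chains, and accounting for the loose square displaced by each chain—is the step I expect to require the most care.
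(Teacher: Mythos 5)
Your proof takes essentially the same route as the paper's: the identical potential $\sum_{s=(x,y)} (2x+y)$ bounded by $Pn$ initially, the observation that LM-moves and (compound) corner moves decrease it while only the $O(1)$ exceptional slides inside chain moves can increase it, and the bound of $O(P)$ chain moves per main square, which together give $O(Pn)$ total moves. The paper's version is terser but structurally identical; your extra bookkeeping (the $+1$ from the mirrored re-entry move, the lower bound on the potential, and flagging as delicate the claim that a main square's progress toward the origin cannot rebound) covers details the paper asserts without comment.
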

\begin{proof}
    Using Lemmata~\ref{lem:histogram},~\ref{lem:gamma}, and~\ref{lem:xy-monotone}, we can transform a configuration~$\mathcal{C}$, in which all leaves are chunks of size at least $P$, to an $xy$-monotone configuration. Let $s = (x, y)$ be a square in~$\mathcal{C}$. We assign to~$s$ the score $d(s) = 2x + y$, and let $d = \sum_{s \in \mathcal{C}} d(s)$. Each LM-, and bottom corner move performed in~$\mathcal{C}$ decreases~$d$ by at least~$1$, while every top corner move decreases~$d$ by~two. Initially, $d \leq |\mathcal{C}| \cdot P$, so the total number of LM- and corner moves is also at most $|\mathcal{C}| \cdot P$. Every square~$s$ is involved in at most $P/2$ chain moves, since each chain move places $s$ closer to the origin in the bottom row/leftmost column. Furthermore, every chain move adds at most one additional move for a loose square, which increases the above score by at most two, hence the total number of moves as a result of chain moves is also at most~$O(|\mathcal{C}| \cdot P)$.
\end{proof}

\section{Experiments}\label{sec:experiments}

We experimentally compared our {Gather\&Compact} algorithm to the JavaScript implementation\footnote{\url{https://dccg.upc.edu/people/vera/TFM-TFG/Flooding/}\label{foot:MSlink}} 
of the in-place modification by Moreno and Sacrist\'an~\cite{BachelorMoreno,flooding}
of the Dumitrescu and Pach~\cite{dumitrescu-pushing-squares-2006} algorithm, which we refer to as {MSDP} in the remainder of this section. The original algorithm by Dumitrescu and Pach always requires $\Theta(n^2)$ moves, since it builds a horizontal line to the right of a rightmost square as canonical shape. The in-place modification of Moreno and Sacrist\'an has the potential to be more efficient in practice, since it builds a rectangle within the bounding box of the input. 

\begin{table}[b]
    \centering
    \small
    \begin{tabu}{X[0.33,c] X[r]X[r]X[0.9,r] X[1.16,r]X[1,r]X[r]}
        \toprule
         & \multicolumn{3}{c}{{\bfseries Gather\&Compact}} & \multicolumn{3}{c}{\textbf{MSDP}} \\
        \rowfont[c]{\bfseries}
        $D$ & $50\%$ & $70\%$ & $85\%$ & $50\%$ & $70\%$ & $85\%$ \\
        \cmidrule(lr){1-1}\cmidrule(lr){2-4}\cmidrule(lr){5-7}
        $10$ & 237 \textcolor{gray}{$31\%$} & 156 \textcolor{gray}{$16\%$} & 95 \textcolor{gray}{$8\%$} & 502 \textcolor{gray}{$19\%$} & 427 \textcolor{gray}{$21\%$} & 233 \textcolor{gray}{$35\%$} \\
        $32$ & 5.395 \textcolor{gray}{$\phantom{3}4\%$} & 4.188 \textcolor{gray}{$\phantom{1}5\%$} & 2.529 \textcolor{gray}{$8\%$} & 28.759 \textcolor{gray}{$12\%$} & 18.447 \textcolor{gray}{$13\%$} & 10.027 \textcolor{gray}{$\phantom{3}8\%$} \\
        $55$ & 25.916 \textcolor{gray}{$\phantom{3}2\%$} & 20.024 \textcolor{gray}{$\phantom{1}3\%$} & 12.124 \textcolor{gray}{$4\%$} & 193.390 \textcolor{gray}{$\phantom{1}8\%$} & 116.431 \textcolor{gray}{$12\%$} & 61.617 \textcolor{gray}{$\phantom{3}8\%$} \\
        $80$ & 77.745 \textcolor{gray}{$\phantom{3}2\%$} & 60.516 \textcolor{gray}{$\phantom{1}2\%$} & 36.395 \textcolor{gray}{$3\%$} & 638.847 \textcolor{gray}{$12\%$} & 344.529 \textcolor{gray}{$\phantom{1}9\%$} & 235.413 \textcolor{gray}{$\phantom{3}5\%$} \\
        $100$ & 150.666 \textcolor{gray}{$\phantom{3}1\%$} & 118.232 \textcolor{gray}{$\phantom{1}2\%$} & 69.488 \textcolor{gray}{$3\%$} & 1.318.232 \textcolor{gray}{$11\%$} & 743.133 \textcolor{gray}{$17\%$} & 513.113 \textcolor{gray}{$\phantom{3}7\%$}\\
        \bottomrule\smallskip
    \end{tabu}
    \caption{The number of moves for Gather\&Compact and MSDP on various grid sizes ($D \times D$, such that $P = 4D$) and densities (in \% of $D \times D$). Averages and standard deviations (in \% of average) over 10 randomly generated instances are shown.}
    \label{tab:experiments}
    \vspace{-1.5em}
\end{table}
We captured the output (sequence of moves) of {MSDP} and reran the reconfiguration sequences in our tool, to be able to verify movement sequences, count moves, and generate figures. Doing so, we discovered that MDSP was occasionally executing illegal moves, see Appendix~\ref{sec:bugs} for details and for our corresponding adaptations. Some of these issues could be traced to the same origin: MSDP is breaking convex transitions into two separate moves and sometimes acts on the illegal intermediate state. The number of moves we report in Table~\ref{tab:experiments} counts one move both for convex transitions and for slides; hence the numbers can be lower than the numbers Moreno and Sacrist\'an report. However, our adaptations do replace illegal moves with the corresponding (and generally longer) legal movement sequences, and hence the number of moves can also be higher than those they report.

\begin{figure}[t]
    \centering
    \includegraphics{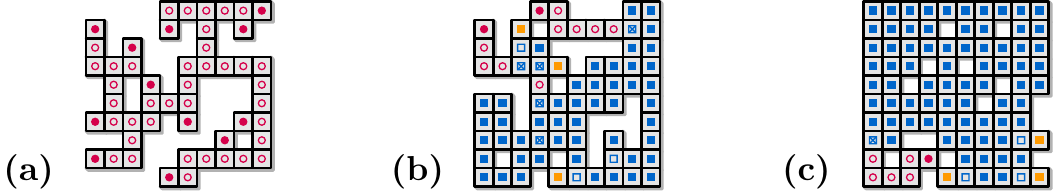}
    \caption{Example input instances on a $10 \times 10$ grid: density \textbf{(a)} $50\,\%$; \textbf{(b)} $70\,\%$; \textbf{(c)} $85\,\%$.}
    \label{fig:inputs}
\end{figure}

\begin{figure}[b]
    \centering
    \includegraphics{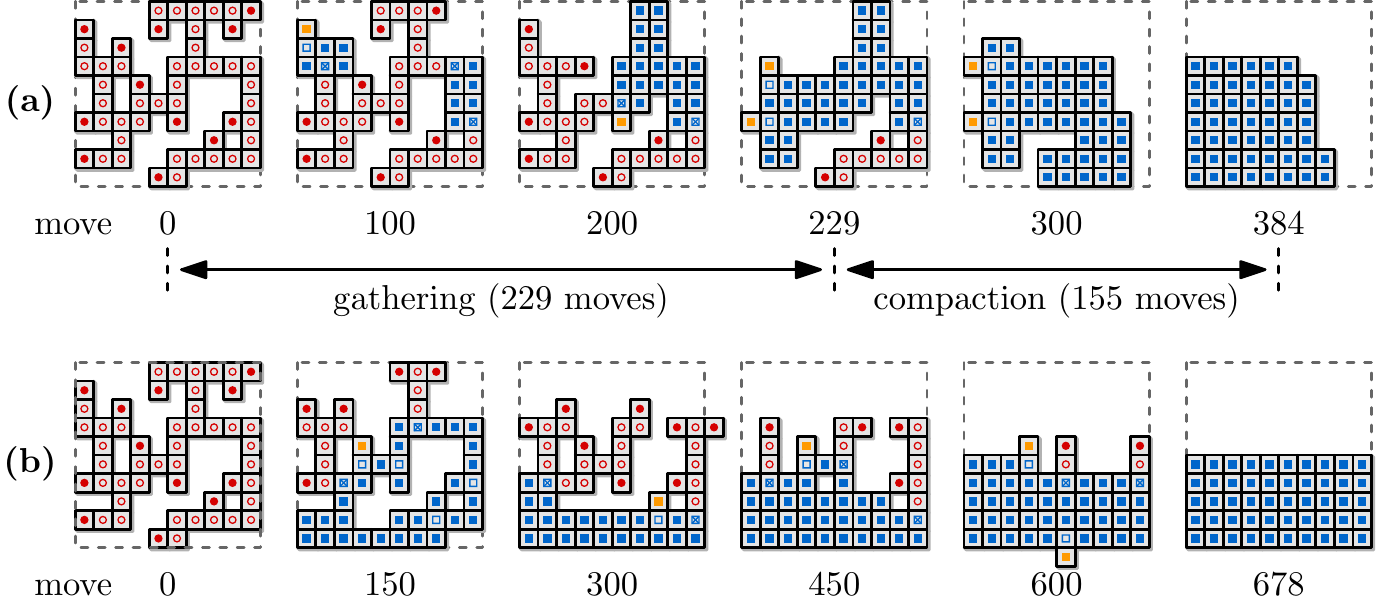}
    \caption{Execution of the two algorithms on one of the input instances for grid size $10 \times 10$, density $50\,\%$. \textbf{(a)} Gather\&Compact; \textbf{(b)} MSDP. Video: \includegraphics[width=2.5mm]{video.pdf} \url{https://tinyurl.com/alga10x10}}
    \label{fig:10x10}
\end{figure}

We use square grids of sizes $10 \times 10$, $32 \times 32$, $55 \times 55$, $80 \times 80$, $100 \times 100$ for our experiments. The data sets for {MSDP} were created by hand and are not available.\footnote{V. Sacrist{\'a}n, personal communication, April 2021.} We attempted to create meaningful data sets of the same nature by starting with a fully filled square grid and then removing varying percentages of squares while keeping the configuration connected. We arrived at three densities, namely ($50\,\%$, $70\,\%$, $85\,\%$), which arguably capture the different types of inputs well (see Figure~\ref{fig:inputs}). 
For each value, the density of the configurations generated is close to homogeneous. 
The configurations with $85 \%$ density are a generalization of the ``dense'' configurations in the data sets for {MSDP}. 
The configurations with $70 \%$ density correspond to the so-called ``medium'' configurations in the data sets for {MSDP}, which combine the two different substructures considered for that density. 
The edge-adjacency graphs of the configurations with $50 \%$ density are essentially trees and, especially in the larger configurations, many leaves are not on the outer boundary (resembling the ``nested'' configurations in the evaluation of {MSDP}). 
For both algorithms we count moves until they reach their respective canonical configurations.
Our online material\footnote{\url{https://alga.win.tue.nl/software/compacting-squares/}} contains our code for {Gather\&Compact}, the input instances, and the adapted version of MSDP.



Table~\ref{tab:experiments} summarizes our results and Figure~\ref{fig:10x10} shows snapshots for both algorithms on a particular instance. We observe that Gather\&Compact always uses significantly fewer moves than MSDP, even on high density instances where most squares are already in place. This is likely due to the fact that MSDP walks squares along the boundary of the configuration, while Gather\&Compact shifts squares locally into better position. Figure~\ref{fig:10x10}b shows this behavior at move 600 where one can observe a square on its way along the bottom boundary.

\section{Conclusion}

We introduced the first universal in-place input-sensitive algorithm to solve the reconfiguration problem for the sliding cube model in two dimensions. Our Gather\&Compact algorithm is input-sensitive with respect to the size of the bounding box of the source and target configurations. We experimentally establish that Gather\&Compact not only improves the existing theoretical bounds, but that it also leads to significantly fewer moves in practice.

We showed that minimizing the number of sliding moves for reconfiguration is NP-complete in two dimensions.
The question then arises whether the problem admits approximation algorithms.
Our NP-hardness proof can be adapted to show APX-hardness in the 3D sliding cube model 
and we conjecture that the problem is also APX-hard for sliding squares.



\bibliography{ref}

\newpage
\appendix

\section{NP-hardness proof}
\label{app:NPhard-sq}

We provide a gadget-based reduction from \textsc{Planar Monotone Rectilinear 3SAT} to prove the following result:

\NPhard*

\paragraph*{Variable gadget}

\begin{figure}[b]
    \centering
    \includegraphics[width=\textwidth,page=2]{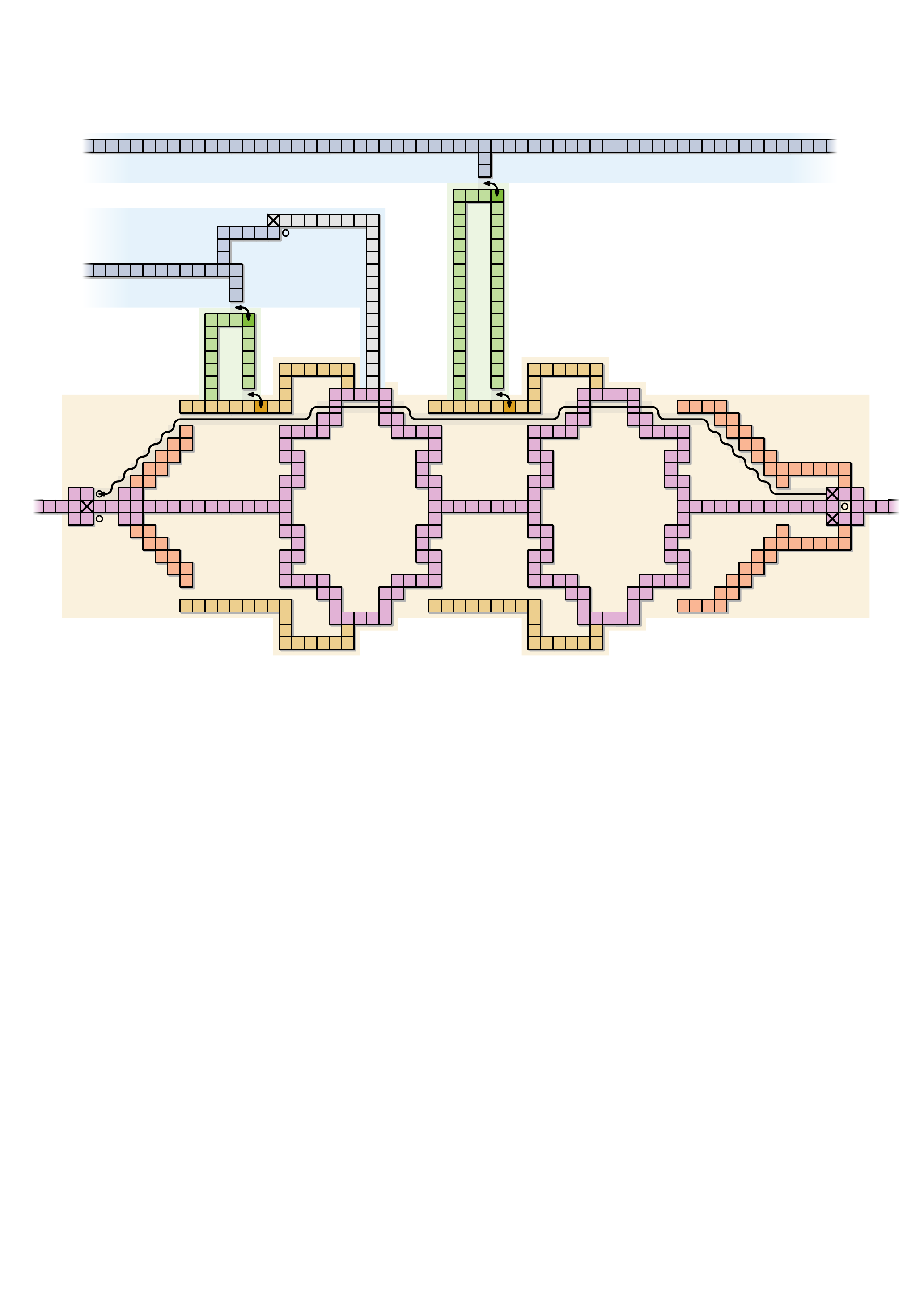}
    \caption{\textbf{(a)} Variable gadget. \textbf{(b)} Wire gadget. \textbf{(c)} Clause gadget.}
    \label{fig:gadgets-app}
\end{figure}

A SAT formula in conjunctive normal form (CNF) is said to be {\em monotone} if its clauses contain only positive literals or only negative literals. Similarly, a CNF formula is {\em planar} if the variable-clause incidence graph is a planar graph. 
It is NP-hard to determine whether a CNF formula $\mathcal{C}$ in which every clause contains three literals (some maybe duplicated) is satisfiable, even if we know that $\mathcal{C}$ is planar and monotone~\cite{monotone3sat}. 

It is known that the incidence graph of a planar monotone CNFs can be drawn in a {\em rectilinear} way such that $(i)$ all variables are on the $y=0$ axis, $(ii)$ clauses with positive variables are on the $y>0$ halfplane (or $y<0$ halfplane otherwise), and $(iii)$ edges that connect variables with the clauses they appear in can be drawn with rectilinear edges without crossings (see Figure~\ref{fig:monotone3SAT}).

All variables appear sequentially on the $x$-axis in a rectilinear planar drawing of a planar monotone 3SAT instance. 
For each variable, we create the gadget shown in Figure~\ref{fig:gadgets-app}a. 
The variable gadgets are connected by a horizontal line of squares; the spacing between the gadgets is described afterwards but we can assume that they are far enough apart that they cannot interact with each other. 
With these connections our reduction has a central path of cycles that goes through all variable gadgets (depicted in pink in Figure~\ref{fig:gadgets-app}a). 
    
We note the following properties of the variable gadget:

\begin{itemize}
    \item The variable gadget is horizontally symmetric. 
    
    \item We mark the start and end of a variable gadget with three small cycles.  
    Two of them form a $2\times 3$ rectangle and are on the left side. 
    The other one fits in a $3\times 3$ grid and is on the right side. 
    Note how additional squares are also attached to the middle and rightmost small cycles, forming staircase-like shapes. 
    These additional squares are shown in orange in Figure~\ref{fig:gadgets-app}a.
    
    \item The gadget associated to variable $x_i$ has $k_i$ \diam-shaped cycles in the path of cycles, where~$k_i$ is the number of times that the variable $x_i$ appears in $\mathcal{C}$.
    These \diam-shaped cycles are of constant size (contained in a rectangle of size $13\times 19$ units). 
    We space consecutive cycles of the same gadget by a line of $7$ squares (see Figure~\ref{fig:reduction}). 
    The spacing between the rightmost and leftmost \diam-shaped cycles and the closest small cycle that represent the end of the variable gadget is 11 units.
    
    \item Each \diam-shaped cycle has two {\em prongs}. Each prong is a path of 18 squares forming an almost horizontal segment that extends to the left of the cycle (depicted in yellow in Figure~\ref{fig:gadgets-app}a). 
    We space consecutive cycles of the same gadget by $7$ squares from each other (see Figure~\ref{fig:reduction}).
    
    \item The source and target configurations are very similar. 
    The only differences are three positions in the leftmost and rightmost cycles that mark the start and end of the gadget. 
    Positions that initially start occupied and must be emptied are marked with $\times$ in Figure~\ref{fig:gadgets-app}a. 
    Similarly, there are three positions that start empty and must be occupied (marked with~$\circ$). 
    Note that the horizontal distance between a $\times$ square on the right side and a $\circ$ position on the left side is $5+11+20k_i-7+11 = 20k_i + 20$. 
\end{itemize}

\varlem*
\begin{proof}
First we argue a lower bound by an earth moving argument: a reconfiguration need not precisely move the $\times$ squares to the $\circ$ positions (in fact, as we will see later, this is not the case in an optimal solution). However, any solution must transfer mass from the $\times$ squares to the $\circ$ empty grid cells. Thus, the minimum distance matching between the $\circ$ empty grid cells and the $\times$ squares is a trivial lower bound for the number of moves required.

Recall that the variable gadgets will be spaced so that they cannot interfere with each other. 
The same spacing will be applied vertically ensuring that $\times$ squares and $\circ$ positions in a variable gadget must be transferred to each other. 
Any perfect matching between $\times$ squares and $\circ$ positions in a variable gadget contains a pair consisting of a $\times$ square on the right and a $\circ$ position on the left. 
As mentioned above, the horizontal distance between the $\times$ squares on the right and the $\circ$ positions on the left and is $20k_i+20$. 
The other two pairs in the matching need at least two extra moves; we can see that they are additional to the $20k_i+20$ required moves since they must happen to the left or right of those moves. This means that any solution must use at least $20k_i+24$ moves.

Thus, to match the $20k_i+24$ bound, we need to transfer a $\times$ square on the right to a $\circ$ position on the left using only $x$-monotone moves. 
More precisely, each move should make progress to the left, so for each two consecutive $x$-coordinates between the two positions there can only be exactly one $x$-monotone move in an optimal sequence on moves.
We claim that there are only two such sequences.
This is based on the fact that there are only two possible strictly $x$-monotone ways to traverse a \diam-shaped cycle $\sigma$. 
We argue about the top half of the cycle; the other half is symmetric. 
Note that the sequence of $x$-monotone moves cannot reach the top row of $\sigma$. 
The second row from the top allows optimal traversal, as depicted in Figure~\ref{fig:reduction}. 
It requires moving a square in $\sigma$ in that row to the left when its position can be afterwards filled with a square moving from the right (see Figure~\ref{fig:sequence-var}a). 
To make progress to the left moving a square in any other row of $\sigma$ is not possible in an optimal sequence as the position of the square moved cannot be replaced using a sequence of strictly $x$-monotone moves and maintaining connectivity at all times. 

\begin{figure}[tb]
    \centering
    \includegraphics[width=\textwidth,page=3]{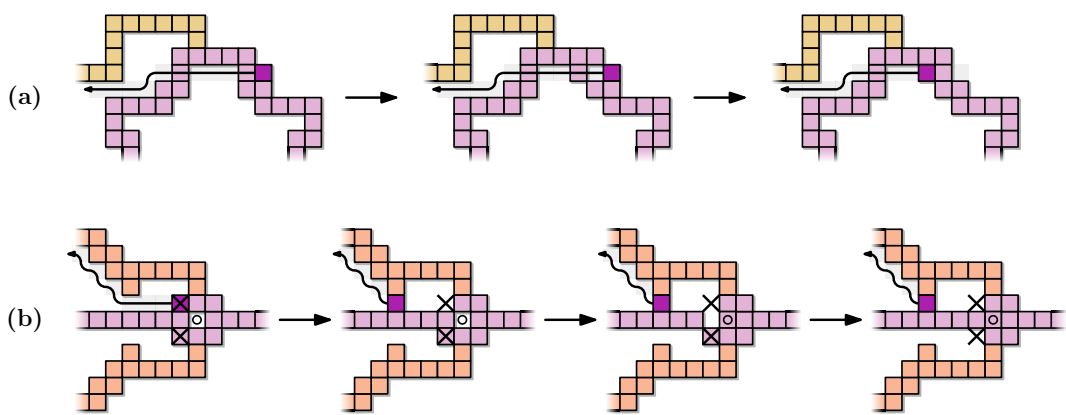}
    \caption{Sequence of moves to reconfigure a variable gadget.}
    \label{fig:sequence-var}
\end{figure}

Thus, there are two valid optimal ways to traverse a \diam-shaped cycle:
either using a horizontal path one unit below the top of the cycle or using a horizontal path one unit above the bottom of the cycle. 
It is only possible to connect two such paths in an $x$-monotone way if they are both on along the top or along the bottom of the cycles, and it is easy to verify that such $x$-monotone connecting sequences are unique.
We have therefore shown that to transfer a $\times$ square on the right to a $\circ$ position on the left we must use the path in Figure~\ref{fig:reduction} or the horizontally symmetric one. 

In order to reconfigure the whole variable gadget in $20k_i+24$, the moves must be done in a very specific order.  
We might want to start by doing the moves that transfer 
a $\times$ square on the right to the nearby $\circ$ position. 
For that we would need to start by moving a square that is edge-adjacent to both positions, but this is not possible because of the \emph{backbone} requirement (during the move the configuration would not be connected). 
Instead, we start by moving one of the two $\times$ squares on the right to the corresponding $\circ$ empty cell on the left, but we only move three positions along the path. After three movements we create a cycle that contains a square that we initially wanted to move (see Figure~\ref{fig:sequence-var}b). 
Thus, the square originally between both $\times$ squares can slide to the $\circ$ position. 
We then move the $\times$ square in its original position on the right to occupy the just emptied position. After those two moves we can continue moving the $\times$ square that started moving all the way until it reaches the matching $\circ$ position to the right. Finally, two more moves are needed to transfer the left $\times$ square to the remaining $\circ$ position. This is a total of $20k_i+24$ moves as claimed.
\end{proof}

In short, Lemma~\ref{lem:variable} shows that in order to reconfigure from the source to the target configuration using as few moves as possible we must transfer one of the two $\times$ squares on the right to a $\circ$ position on the left. During the move process, the $\times$ square will create cycles involving the central path of cycles and either each of the upper or lower prongs. 
We use these cycles to determine a truth assignment to each variable. 
We assign a variable to \texttt{true} if in the reconfiguration process we create cycles with each of the upper prongs in the gadget associated to that variable. 
We assign it to \texttt{false} if the cycles are created with all the lower prongs instead. 
Note that, if we allow more than $20k_i+24$ moves, we could assign a variable two values (or even to create no cycles and thus have no assignment). 
As we will see later, neither of these cases will be possible in a solution with our specified number of moves.

\paragraph*{Clause and wire gadgets} 

The clause gadget mainly consists of a set of squares forming a \emph{pitchfork} ($\pitchfork$) shape (in blue in Figure~\ref{fig:gadgets-app}c). 
The pitchfork has three \emph{tines} consisting of two squares each. 
Each tine corresponds to a literal in the clause. 
We add a path of squares connecting the $\pitchfork$ shape to the central path of cycles so that the source configuration is connected (drawn in gray).

Similar to the variable gadget, most squares are in both the source and the target configurations. The only exception is one square (marked with $\times$) that wants to be transferred to a nearby position (marked with $\circ)$.  
Note that the reconfiguration could be done with exactly two moves if the connectivity condition was met. 
However, by construction of the gadget, the move is initially not possible as it would disconnect the $\pitchfork$ part. 

The wire gadgets are connected to the variable gadgets and part of them is placed very close to each of the tines of a pitchfork as illustrated in Figure~\ref{fig:gadgets-app}b (the squares associated to the wire gadget are shown in green). 
The goal is to allow creating an additional connection between the $\pitchfork$ of a clause gadget and the variables gadgets in two moves as long as the variable assignment satisfies the clause. 
Once this connection is created the clause gadget can be reconfigured in two moves, as its original connection to the central path of cycles is now part of a cycle. 

The precise placement of the wire gadgets is based on the rectilinear drawing of the planar monotone 3SAT instance. 
Each connection between a clause and a variable is replaced by a wire gadget, which is a path of squares that form a $\sqcap$ shape for positive literals and a $\sqcup$ shape for negative ones. 
More precisely, the path consists of two vertical lines of squares (the left one one unit shorter than the right one) and a horizontal line of four squares. 
Each wire gadget is attached at the left endpoint of the path to a distinct prong of the corresponding variable gadget (recall that each variable gadget contains as many prongs as the number of times that it appears in $\mathcal{C}$, hence it will always be possible to do so). We can use the wires (and their relative order in the embedding) to match each literal in a clause to a prong (note that there will be spare prongs since we made the instance symmetric). 
In the wire gadget all squares are in both the source and target configurations. 

The positive clause gadgets are placed so that the bottommost square of a tine is two units above and one unit to the left of the topmost rightmost square of the corresponding wire gadget. 
The connection of the clause gadget to the central path of cycles is done by a rotated L-shaped path that connects the pitchfork to the \diam-shaped cycle to which the wire gadget corresponding to the rightmost literal is attached through a prong. 
The placement of the negative clause gadgets is the horizontally symmetric analogue. 
Note that we need to use several prongs in a variable gadget to guarantee that the wire gadgets can be directly connected to the path of cycles. 
If we were using only one prong for all the wire gadgets, the clause $x_1 \lor x_3 \lor x_5$ in Figure~\ref{fig:monotone3SAT} would not be able to get directly attached (without crossings) to the central path of cycles. 

We place the clause gadgets at different heights increasing the heights and make the vertical separations between gadgets $100m$ units. 
This is the same separation used between variable gadgets. 
This guarantees that, in a reconfiguration that uses
$100m$ moves or fewer, 
any square that comes in contact with the tines must come from either the clause , or one of the wire gadgets that corresponds to a literal in the clause, or a clause connection next to such a wire gadget.

\clauselem*
\begin{proof}
If the condition in the statement is true, at that point we can move one square from the prong so that the wire gadget forms a cycle (see Figure~\ref{fig:reduction}). 
This cycle allows us to move another square of the wire gadget so that 
we create a cycle that goes through the clause gadget. 
Once we have this cycle, we can reconfigure the clause gadget with two moves. 
Finally, we need two more moves to undo the changes created in both the wire gadget and the prong. This is a total of six moves as claimed. 

For the reverse direction, assume that we  
never created a cycle that includes a prong associated to a literal in the clause. 
The separation between gadgets implies that we must spend additional moves to create a cycle that allows to reconfigure the clause gadget 
and the best way to do so is creating (with one additional move) a cycle that includes a prong associated to a literal in the clause. 
This would require strictly more than six moves. 
\end{proof}

\paragraph*{Overall reduction}

If we put the variable gadgets directly next to each other, their total width is $60m+25n$. 
To make sure that variable gadgets do not interact with each other, 
we make the space between gadgets of variables $x_i$ and $x_{i+1}$ to be $100m+in$ units. 
This spacing fills two roles: similar to the vertical separation between different clause gadgets it prevents interaction between different gadgets. Also, the unique spacing between makes sure that any solution (that uses $O(n+m)$ moves) must match the source and target gadgets (that is, global translations will not help). 
Thus, the whole reduction fits in a rectangle of sidelength $O(n^2+m^2)$ and the construction therefore has polynomial size. 
The next lemma finishes the proof of Theorem~\ref{thm:NP-hard-sq}. 

\finallem*
\begin{proof}
Assume that the 3SAT instance is satisfiable. 
This means that each variable can be assigned a truth value so that all clauses are satisfied. 
By Lemma~\ref{lem:variable} we know that we can reconfigure the gadget associated to variable $x_i$ in $20k_i+24$ many moves 
using the top path if in the solution $x_i$ is \texttt{true} and the bottom path otherwise. 
We satisfy the variable gadgets one by one. 
When using the top or the bottom path we stop at the positions in which we create a cycle including a prong. 
If this prong is associated to a literal satisfiying a clause that has not been reconfigured yet, we reconfigure this clause as in  Lemma~\ref{lem:clause}. 

By construction the top (bottom) path in a variable gadget closes cycles with all top (bottom) prongs. 
Since we are reducing from rectilinear planar monotone 3SAT and using a truth assignment that satisfies the formula, when the last variable gadget finishes its reconfiguration all clause gadgets have been reconfigured. 
Although we do not have a direct bound on $k_i$, we know that $\sum_i k_i = 3m$ (since each clause contains three variables). 
Thus, it required $\sum_i 20k_i+24 = 60m+24n$ many moves to reconfigure all variable gadgets. 
Each clause was reconfigured using exactly six moves. 
Wire gadgets need no reconfiguration, 
so in total $66m+24n$ many moves were sufficient to reconfigure the whole construction.

Assume on the contrary that the reconfiguration problem instance can be solved using $66m+24n$ sliding moves. 
By Lemma~\ref{lem:variable} we must use at least $20k_i+24$ moves to reconfigure the gadget for variable $x_i$. 
Since the gadgets are independent and $\sum_i 20k_i+24 = 60m+24n$, there are at most $6m$ moves left to also reconfigure all the clause gadgets. 
By Lemma~\ref{lem:clause} we must spend at least six moves to reconfigure each clause gadget. 
It follows from the proof of this Lemma that at least six moves per clause gadget reconfiguration process do not contribute to the reconfiguration of any other gadget. 
Thus, exactly $20k_i+24$ moves are spent to reconfigure the gadget for variable $x_i$ 
and exactly six moves to reconfigure each clause gadget.
By Lemma~\ref{lem:variable}, the prongs that become part of cycles in the reconfiguration of the the gadget for variable $x_i$ are either all the top ones or all the bottom ones. 
Note than when a cycle through a clause gadget is created 
it includes only parts of the prong and wire gadget that contributed to its creation, part of the clause gadget itself, and part of the central path of cycles. 
In particular, no prong becomes part of a cycle as a result of reconfiguring a clause gadget with six moves. 
By Lemma~\ref{lem:clause}, to reconfigure a clause gadget with six moves it must happen that a prong associated to a literal in the clause is part of a cycle.
Since we established that all cycles containing prongs are created in the reconfiguration of variable gadgets, 
the top/bottom path choice in the reconfiguration of variable gadgets defines a truth assignment satisfying the 3SAT formula.
\end{proof}

One final subtlety is that we cannot assume the relative position of the source and target configurations in our construction. 
However, it is easy to force the alignment of all the gadgets (except for the marked positions). 
The current construction has polynomial size. 
Let $N$ be a polynomial function of $n$ and $m$ such that the whole contraction fits in a square grid of sidelength $N$. 
If the instance is satisfiable (and actually otherwise too) and the source and target configurations have the gadgets aligned, we can reconfigure in $O(N)$ moves. 
More precisely, there is a constant $\kappa$ such that we can reconfigure in no more than $\kappa N$ moves. 
We can add to both the source and target configurations 
a line with $\kappa N$ squares 
to the right of the central path of cycles in the rightmost variable gadget. 
To the right endpoint of these lines we attach a set of squares forming a large square shape of sidelength $\kappa ^2 N^2 + N$. 
Note that this only increases the size of the construction by a polynomial factor. 
Any relative placement of the source and target configurations that does not align these large squares (and therefore the gadgets) requires at least $\kappa ^2 N^2$ moves, 
and can therefore not be optimal. 

\section{Omitted material from Section~\ref{sec:algorithm}}
\label{sec:omitted}

\gatheringLemma*

\begin{proof}
    By Lemma~\ref{lem:light-square}, we can make a light square $s$ part of a chunk in $O(P)$ moves by moving cubes from~$D$, the set of descendants of $s$. This process creates new light squares only if removal of a square~$a$ breaks a cycle in~$D$. Thus, every new light square is part of~$D$.
    
    We repeat the procedure, selecting a light square of maximal capacity at every step. Overall, a square can be light at most once in the process. Thus, after $O(Pn)$ moves no light squares remain, and all the leaves in the component tree are chunks of size at least~$P$. Note that, while the root square always has capacity $n-1$, an adjacent square can have capacity $2 \leq |D| \leq P$, and the resulting chunk will be the root component, either because the root square is on the boundary cycle, or is a loose square. In case the adjacent square is too light, namely $|D| < 2$, then the root component may stay a link.
\end{proof}

\begin{figure}[b]
    \renewcommand{\thefigure}{\ref{fig:corner-moves-type-b}}
    \centering
    \includegraphics{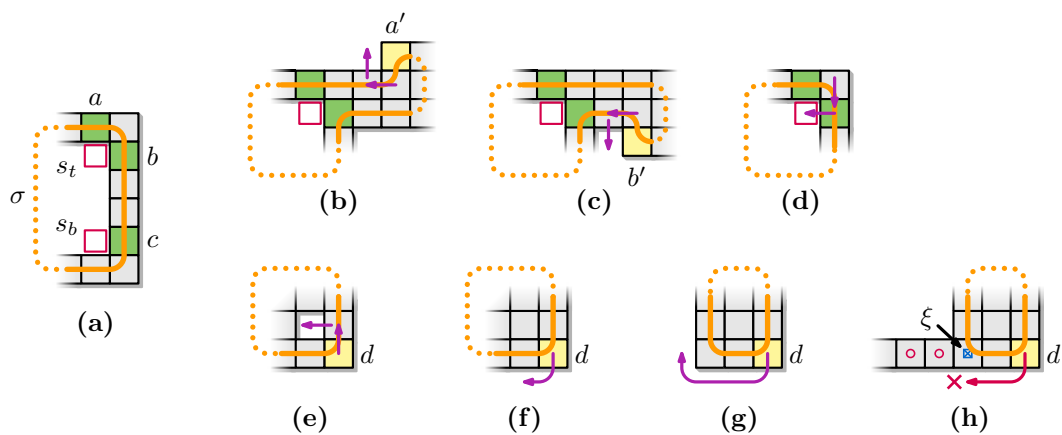}
    \caption{(repeated from main text) Lemma~\ref{lem:histogram}: \textbf{(a)} A chunk hole with $s_t$ and~$s_b$ marked. \textbf{(b)}--\textbf{(d)} Possible valid moves for~$\sigma'$.  \textbf{(e)}--\textbf{(g)}
    Possible valid moves for~$\sigma''$.
    \textbf{(h)} If a chain move is not possible then $\xi$ is left of $d$.}
    \addtocounter{figure}{-1}
\end{figure}

\histogram*
\begin{proof}
    We first show that $C$ is solid. Assume to the contrary that $C$ has a hole. Consider the top- and bottommost empty squares $s_t$ and $s_b$ of the rightmost column of empty squares of any hole in~$C$ ($s_t$ may be equal to~$s_b$). Let $a$ and~$b$ be the \dirN- and \dirE-neighbors of~$s_t$, respectively, and let $c$ be the \dirE-neighbor of~$s_b$ (see Figure~\ref{fig:corner-moves-type-b}a). We know that $a, b, c \in \sigma$, because otherwise moving $a$ or~$b$ into~$s_t$, or $c$ into~$s_b$, would be a valid LM-move. $C$ has at most one connector~$\xi$, which is part of~$\sigma$. We now show that $\xi$ lies strictly between $a$ and~$b$ on~$\sigma$, and also that $\xi$ lies strictly between $c$ and~$a$, to arrive at a contradiction.
    
    Let $\sigma'$ be the part of $\sigma$ strictly between $a$ and~$b$, walking along the boundary of $C$ in the clockwise order. If $\sigma'$ visits the row above~$a$ (leaving the row of~$a$ for the first time at square~$a'$), then there exists a bottom corner move filling the \dirW-neighbor of~$a'$ (see Figure~\ref{fig:corner-moves-type-b}b). This move is valid since the part of $C$ to the right of $s_t$ by definition does not contain any holes. Similarly, if $\sigma'$ visits the row below~$b$ (returning to the row of~$b$ for the last time at square~$b'$), then there exists a top corner move filling the \dirW-neighbor of~$b'$ (see Figure~\ref{fig:corner-moves-type-b}c). We conclude that the part of~$S$ attached to $a$ and~$b$ is a protrusion of two rows tall. The protrusion is non-empty, because otherwise we can either move a loose square with an LM-move or fill~$s_t$ with a corner move (see Figure~\ref{fig:corner-moves-type-b}d). Consider the rightmost column of the protrusion. If this column contains a loose square, we can move it with an LM-move. Hence, the column contains a square in the row of~$a$ and a square in the row of~$b$. If neither of these is~$\xi$, then we can perform an LM-move. Hence, $\sigma'$ contains~$\xi$.
    
    Let~$\sigma''$ be the part of~$\sigma$ strictly between $c$ and~$a$ in clockwise order. Walk over~$\sigma''$ until encountering the first square~$d$ whose \dirN- and \dirW-neighbors are adjacent to~$d$ on~$\sigma$. Assume that $\xi$ is not $d$ or one of its neighbors. If $d$ or the \dirN-neighbor of~$d$ have a loose square attached to them, we can perform an LM-move on this loose square. Otherwise, if the \dirNW-neighbor of~$d$ is part of a hole in~$C$, then this hole can be filled with a bottom corner move (see Figure~\ref{fig:corner-moves-type-b}e). Otherwise, there are three cases: either (1) we can perform an \dirS- or \dirSW-move on~$d$ (see Figure~\ref{fig:corner-moves-type-b}f), or (2) we can perform a horizontal chain move (see Figure~\ref{fig:corner-moves-type-b}g), or (3) the chain move is impossible because no empty square~$e$ is available to move to (see Figure~\ref{fig:corner-moves-type-b}h). In cases (1) and (2), by contradiction $\xi$ is $d$ or one of its neighbors; in case (3), $\xi$ needs to be in the bottommost row. In any case, $\sigma''$ contains~$\xi$.
    
    As $\sigma'$ and~$\sigma''$ are disjoint, they cannot both contain~$\xi$, which results in a contradiction. Hence, $C$ is solid. To show that $\sigma$ outlines a left-aligned histogram, we observe that any external corner (with neighbors $b_1, b_2, b_3$ as defined above) admits a valid corner move. Indeed, none of $b_1, b_2, b_3$ can be a loose square, as those would admit LM-moves. Furthermore, as $C$ is solid, a boundary square cannot be a cut square for a component on the inside of~$\sigma$. Finally, since $C$ is a leaf chunk, the only cut square in~$\sigma$ for a component on the outside is its connector. Since only one out of $b_1$ and~$b_3$ can be this connector, we can perform a valid corner move starting with the other (non-connector) square. Therefore, by our assumption that there are no corner moves in~$C$, there cannot be external corners, and thus $\sigma$ outlines a left-aligned histogram (see Figure~\ref{fig:hist-chunk}).
\end{proof}

\begin{figure}
    \begin{minipage}[t]{0.48\textwidth}
        \renewcommand{\thefigure}{\ref{fig:hist-chunk}}
        \centering
        \includegraphics{figures/hist-chunk}
        \caption{(repeated from main text) The boundary cycle~$\sigma$ of the leaf chunk $C$ outlines a left-aligned histogram.}
        \addtocounter{figure}{-1}
    \end{minipage}
    \hfill
    \begin{minipage}[t]{0.48\textwidth}
        \renewcommand{\thefigure}{\ref{fig:L-chunk}}
        \centering
        \includegraphics{figures/L-chunk}
        \caption{(repeated from main text) The chunk~$C$ forms a double-$\Gamma$ shape, with one or two loose squares (orange).}
    \end{minipage}
\end{figure}

\doubleGamma*
\begin{proof}
    Let~$C^*$ be the set of squares outlined by~$\sigma$. By Lemma~\ref{lem:histogram}, $C^*$ is a left-aligned histogram. Let $\{ r_1, r_2, \ldots \}$ be the rows in~$C^*$, ordered from top to bottom. The connector~$\xi$ of~$C$ lies on~$\sigma$, and thus in~$C^*$. Assume that $\xi$ is in row~$r_i$ ($i \geq 3$). In that case, the leftmost square of~$r_1$ has a valid move~$m$: a \dirWS-move or a vertical chain move. In particular, $m$ is not blocked by a loose square in the leftmost column of~$C$, because that column can contain only one loose square in the last row of~$C^*$ (otherwise one of the other loose squares would admit a valid \dirS-move). Similarly, $C$ cannot contain a loose square, which could block~$m$, in its topmost row. Indeed, such a loose square would admit a \dirW-, \dirWS-move, or a vertical chain move (if it has an \dirS-neighbor in~$\sigma$), or it would admit an \dirS-move (if it has a \dirW- or \dirE-neighbor in~$\sigma$). The existence of~$m$ leads to a contradiction, so $\xi$ lies on $r_1$ or~$r_2$. Because $C^*$ is 2-connected, this implies that $|r_1| = |r_2|$, forming the horizontal leg of the double-$\Gamma$.

    Consider the last row~$r_k$ ($k \geq 3$) such that $|r_k| > 2$. The rightmost square in~$r_k$ has a valid \dirSW-, \dirS-move, or horizontal chain move (which, by a similar argument as before, cannot be blocked by a loose square). On the other hand, $|r_k| \geq 2$, again because of 2-connectivity. Therefore, for all rows $r_i$ ($i \geq 3$), $|r_i| = 2$, forming the vertical leg of the double-$\Gamma$.
\end{proof}

\subsection{Light configurations}\label{sec:light}

We say that a configuration~$\mathcal{C}$ is \emph{light}, if it consists of fewer than $P$ squares, where $P$ is the perimeter of the bounding box of $\mathcal{C}$. Our algorithm, as explained in the main text, cannot directly handle such configurations if $\mathcal{C}$ does not contain the origin: there are too few squares to guarantee that compacting will always result in a chunk that contains the origin. However, we can use a simple preprocessing step to ensure that $\mathcal{C}$ will contain the origin.

For a light configuration~$\mathcal{C}$ which does not contain the origin, we select a stable square as in the gathering phase: a stable square in a link, or an extremal stable square in a chunk. We iteratively move this stable square along the boundary of~$\mathcal{C}$ to the empty cell~$e$ that is the \dirW-neighbor of the root square. We iteratively continue to do so until $\mathcal{C}$ contains the origin. Note that $e$ must necessarily be empty.

At this point, we can simply gather and compact $\mathcal{C}$ and arrive at an $xy$-monotone configuration, for the following reason. The gathering phase works as in the main text, since we can iteratively apply Lemma~\ref{lem:light-square} on the light square closest to the root (which can be the root itself), to get a single chunk. As the root square is located at the origin, and we never move the root during gathering, we get a chunk containing the origin. Similarly, in the compaction phase, this chunk will become a solid left-aligned histogram by Lemma~\ref{lem:histogram}. Since it already contains the origin, and we do only monotone moves towards the origin, this chunk still contains the origin. Finally the topmost row $r$ of this histogram, that is longer than the row below it, still has valid LM-moves. These moves will happen during compacting, hence the result is an $xy$-monotone configuration.

There are at most $P/2$ empty cells to the left of the root. We fill each of these cells by by walking along the boundary of $\mathcal{C}$.
Since configuration~$\mathcal{C}$ consists of less than $P$ squares, this requires at most $O(Pn) = O(P^2)$ moves. Both gathering and compacting take $O(Pn)$ moves, as proven in the main text, so including the preprocessing, we still arrive at a bound of $O(Pn)$ for the number of moves.

\subsection[Transforming xy-monotone configurations]{Transforming $\boldsymbol{xy}$-monotone configurations}
\label{sec:canonical}

After gathering and compacting we arrive at an $xy$-monotone configuration. 
However, this configuration is not unique and hence we need to be able to transform between such configurations. We use a potential function to guide this transformation.

\begin{figure}[t]
    \centering
    \includegraphics{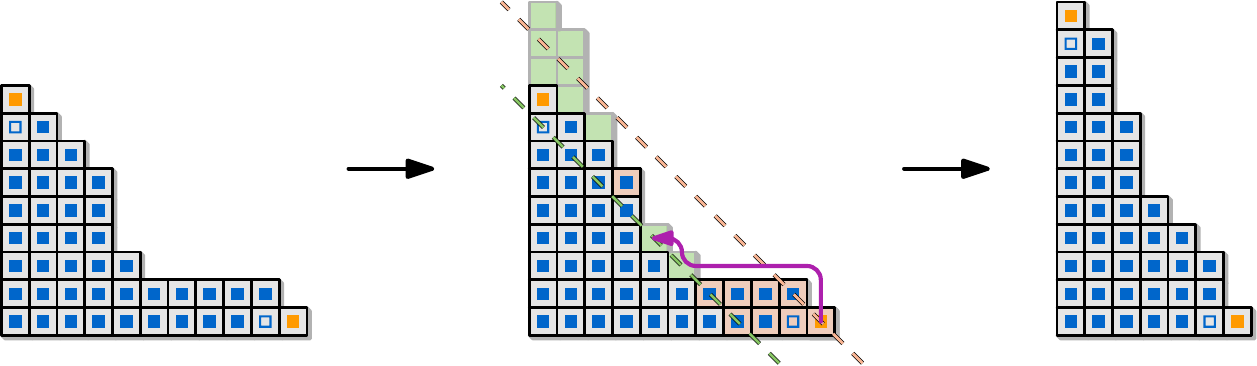}
    \caption{Transforming between two $xy$-monotone configurations. The dashed lines go through cells with the same potential.}
    \label{fig:canonical}
\end{figure}

\begin{lemma}
    Let $\mathcal{C}_1$ and $\mathcal{C}_2$ be two $xy$-monotone configurations of $n$ squares each, let $P$ and $P'$ denote the perimeters of their respective bounding boxes, and let $\bar{P} = \max \{ P, P' \}$. We can transform $\mathcal{C}_1$ into $\mathcal{C}_2$ using at most $O(\bar{P}n)$ moves while maintaining edge-connectivity. 
\end{lemma}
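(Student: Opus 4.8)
The plan is to route both configurations through a single \emph{canonical} $xy$-monotone configuration $K_n$ that depends only on $n$, and then concatenate. Fix a potential $\phi(x,y)$ on grid cells that is nondecreasing in each coordinate and whose value is $O(P)$ on the bounding box — for instance $\phi(x,y)=x+y$, or $\phi(x,y)=\max(x,y)$, whose level sets are the dashed lines of Figure~\ref{fig:canonical}. Let $K_n$ be the set of the $n$ cells of smallest $\phi$, breaking ties by a fixed monotone rule; since $\phi$ is monotone in both coordinates, $K_n$ is downward closed and hence $xy$-monotone, and it is determined by $n$ alone. Because every elementary move is reversible, it suffices to show that any $xy$-monotone configuration $\mathcal{C}$ with perimeter $P$ can be reconfigured into $K_n$ using $O(Pn)$ moves while staying connected: transforming $\mathcal{C}_1\to K_n$ in $O(Pn)$ moves and then reversing a transformation $\mathcal{C}_2\to K_n$ in $O(P'n)$ moves yields $\mathcal{C}_1\to\mathcal{C}_2$ in $O(Pn)+O(P'n)=O(\bar P n)$ moves.

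To reconfigure $\mathcal{C}$ into $K_n$, I would repeatedly reduce the symmetric difference by filling the lowest-potential \emph{deficit} cell (a cell of $K_n$ that is currently empty) with a square taken from an \emph{excess} cell (an occupied cell outside $K_n$). By minimality of $K_n$ every excess cell has potential at least that of any deficit cell, so each such relocation moves a square to a cell of smaller potential. Concretely, I would select a \emph{stable} extremal square on the boundary of $\mathcal{C}$ and walk it inward along the boundary, using only slides and convex transitions, until it comes to rest filling the target deficit cell, exactly as boundary squares are routed in the gathering phase (Lemma~\ref{lem:light-square}). The move count is then bounded directly by an earth-mover matching between $\mathcal{C}$ and $K_n$: each of the $n$ squares travels $O(P)$ along the boundary, so the $L_1$-cost of the matching — and hence the total number of moves — is $O(Pn)$. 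Equivalently, the total potential $\Phi=\sum_{s\in\mathcal{C}}\phi(s)$ starts at most $nP/2$, the bounding box never grows, and each relocation strictly decreases $\Phi$, which both certifies progress and bounds the number of relocations.

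The two points that need genuine care are connectivity and the existence of a legal move, and the latter is the main obstacle. For connectivity I would always move a stable square (one whose removal keeps $\mathcal{C}$ connected), as in the gathering analysis; since $\mathcal{C}$ is a solid monotone region at the start and end, and intermediate motions only pull boundary squares toward the origin, such stable squares always exist and edge-connectivity is preserved throughout. The crux is the structural lemma asserting that \emph{whenever} $\mathcal{C}\neq K_n$ there is a valid potential-decreasing move — a slide or convex transition that stays inside the bounding box, keeps the piece connected, and does not overshoot $K_n$. I would prove this by a reflex-corner/step case analysis on the staircase boundary, mirroring Lemmas~\ref{lem:histogram},~\ref{lem:gamma}, and~\ref{lem:xy-monotone}: at a reflex corner one can slide or perform a convex transition that fills a strictly lower-potential cell, and the absence of any such move forces the staircase to coincide with the level-set filling that defines $K_n$. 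The remaining subtlety is the leftmost column and bottom row, where an inward move would leave the box; this is handled exactly by the chain-move device of Section~\ref{sec:compaction}, which lets a square exit and re-enter the box closer to the origin while keeping the per-square bound at $O(P)$.
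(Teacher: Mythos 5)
Your second paragraph is, in essence, the paper's own proof modulo the detour through a canonical shape: the paper transforms $\mathcal{C}_1$ \emph{directly} into $\mathcal{C}_2$ by repeatedly taking the bottommost maximum-potential square of $\mathcal{C} \setminus \mathcal{C}_2$ (with $\phi(x,y)=x+y$) and walking it along the boundary into the topmost minimum-potential empty cell of $\mathcal{C}_2 \setminus \mathcal{C}$. The genuine gap is in your third paragraph, where you identify as ``the crux'' the claim that whenever $\mathcal{C} \neq K_n$ there exists a single valid potential-decreasing slide or convex transition, to be proved by mirroring Lemmas~\ref{lem:histogram}--\ref{lem:xy-monotone}. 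That claim is false. For $\phi(x,y)=x+y$, let $\mathcal{C}$ be a $1 \times n$ row or an $m \times m$ block: the only squares that can move without disconnecting the configuration are the extremal ones, and every legal move available to them either keeps $x+y$ unchanged (e.g., the up-and-over convex transition at the end of the row, or the corner transitions of the block) or increases it or leaves the allowed region; yet neither shape equals $K_n$. For $\phi(x,y)=\max(x,y)$, a staircase triangle is stuck in the same way. So single-move local search provably jams at $xy$-monotone shapes other than $K_n$; this is not a fixable detail of the case analysis, and it is exactly why the compaction lemmas cannot be ``mirrored'' here --- they certify convergence to the \emph{class} of all monotone shapes, not to one prescribed member of that class. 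Multi-move relocations, as in your second paragraph, are unavoidable; the third paragraph must be discarded, not proved.

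What your second paragraph then still needs --- and never establishes --- is the invariant that every intermediate configuration remains $xy$-monotone. Without it, ``walk the square along the boundary'' is neither guaranteed to be legal nor guaranteed to cost $O(\bar{P})$: a general connected configuration has boundary length $\Theta(n)$, which would only give an $O(n^2)$ total, exceeding $O(\bar{P}n)$ when $\bar{P} = o(n)$ (your appeal to Lemma~\ref{lem:light-square} does not help, since its $O(P)$ bound is amortized over the whole gathering phase, not per walk). The invariant does hold, but only under a selection rule you leave unspecified (``a stable extremal square''): remove the \emph{maximum}-potential excess square $s$ --- since $K_n$ is downward closed, an occupied \dirN- or \dirE-neighbor of $s$ would be an excess square of strictly larger potential, so $s$ is a staircase corner whose removal preserves monotonicity and connectivity --- and fill the \emph{minimum}-potential deficit cell $e$, whose \dirS- and \dirW-neighbors (when inside the region) lie in $K_n$, are not deficit cells, and hence are occupied, so insertion preserves monotonicity as well. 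Each relocation is then a legal walk of length $O(\bar{P})$ along the outline of a solid staircase, and there are at most $n$ relocations because a filled cell of $K_n$ is never vacated; note that your alternative accounting (``each relocation decreases $\Phi$ by at least $1$'') bounds the number of relocations only by $O(\bar{P}n)$ and hence the number of moves by $O(\bar{P}^2 n)$, which is too weak. Finally, the insistence on staying inside the bounding box, patched by chain moves, is unnecessary and even self-defeating: for the $1 \times n$ row no move stays inside the box at all. The lemma claims no in-placeness; as in the paper, simply let the single traveling square go around the outside of the configuration.
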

\begin{proof}
    Let $\mathcal{C} \coloneqq \mathcal{C}_1$. For each grid cell $c = (x, y)$, we define the \emph{potential} of~$c$ to be $\phi(c) = x + y$. Let~$s$ be the bottommost square in $\mathcal{C} \setminus \mathcal{C}_2$ whose cell has maximum potential, and let~$e$ be the topmost empty grid cell with minimum potential that is occupied in $\mathcal{C}_2$. We iteratively move~$s$ to~$e$ in~$\mathcal{C}$ until $\mathcal{C} = \mathcal{C}_2$.
    We first show that~$\mathcal{C}$ remains $xy$-monotone.
    Removing~$s$ cannot break this property:  
    by definition of~$\phi$ and since $\mathcal{C}_2$ is $xy$-monotone, $s$ does not have \dirN-, \dirNE-, or \dirE-neighbors in~$\mathcal{C}$. 
    Moreover, if it has a \dirNW-neighbor then, by $xy$-monotonicity of~$\mathcal{C}$, it has a \dirW-neighbor too. 
    Similarly, adding a square in $e$ maintains $xy$-monotonicity. 
    By the definition of $\phi$ and since $\mathcal{C}_2$ is $xy$-monotone, the cells neighboring $e$ in the \dirS, \dirSW, and \dirW directions must be occupied if they are inside the bounding box of~$\mathcal{C}$. 
    Moreover, $xy$-monotonicity of~$\mathcal{C}$ guarantees that $e$ does not have \dirN-, \dirNE-, or \dirE-neighbors.
    
    In every step we move a square from a position occupied in $\mathcal{C}_1$ to a position occupied in~$\mathcal{C}_2$, hence the perimeter of the bounding box of configuration $\mathcal{C}$ is $O(P + P') = O(\bar{P})$. 
    Since moving along the boundary of~$\mathcal{C}$ takes at most $O(\bar{P})$ moves 
    and no square is moved more than once, in total it takes $O(\bar{P}n)$ moves to reconfigure. 
\end{proof}

\section{Illegal moves in {MSDP}}
\label{sec:bugs}

The checks implemented in our tool detected illegal moves executed by the implementation of MSDP available online.\cref{foot:MSlink}
These illegal moves come in different flavors (see Figure~\ref{fig:bugs}): 
\begin{enumerate}[(a)] 
\item illegal convex transitions around an empty grid position,
\item truncated convex transitions (can be also thought of as illegal slides) in which the final position 
is the intermediate cell of a convex transition, 
\item illegal slides in which one of the two cells along which the square slides is empty, and
\item illegal move sequences in which a convex transition is broken into two different moves and a new move starts on the illegal intermediate state. This can be also thought of as an illegal slide.  
\end{enumerate}

The illustrations in Figure~\ref{fig:bugs} are screenshots from the online tool. 
They can be reproduced by drawing the configuration on the left in the online tool and trying to reconfigure it into any other connected configuration with the same number of squares.

\begin{figure}[t]
    \centering
    
    \begin{minipage}[t]{0.48\textwidth}
        \centering
        \includegraphics[scale=0.4]{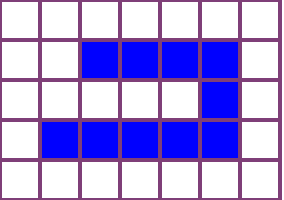}
        \hspace{1em}
        \includegraphics[scale=0.4]{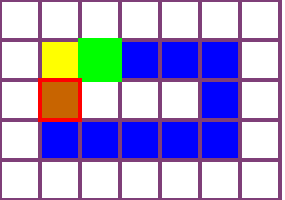}
        \subcaption{Illegal convex transition around an empty cell.}
        \label{fig:invalid-ct}
    \end{minipage}
    \hfill
    \begin{minipage}[t]{0.48\textwidth}
        \centering
        \includegraphics[scale=0.4]{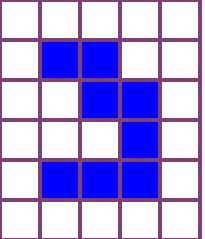}
        \hspace{1em}
        \includegraphics[scale=0.4]{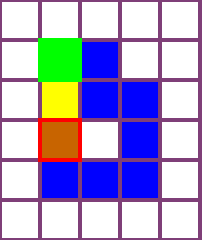}
        \subcaption{Illegal slide/truncated convex transition. 
        The red cell is an intermediate position in a valid convex transition, 
        but the move sequence cannot end here.}
        \label{fig:invalid-truncated}
    \end{minipage}
    
    \begin{minipage}[t]{0.48\textwidth}
        \centering
        \includegraphics[scale=0.4]{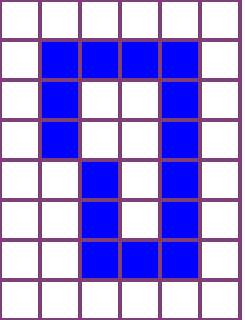}
        \hspace{1em}
        \includegraphics[scale=0.4]{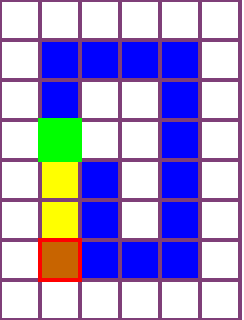}
        \subcaption{Illegal slide: the green square cannot directly slide to the adjacent yellow cell.}
        \label{fig:invalid-slide}
    \end{minipage}
    \hfill
    \begin{minipage}[t]{0.48\textwidth}
        \centering
        \includegraphics[scale=0.4]{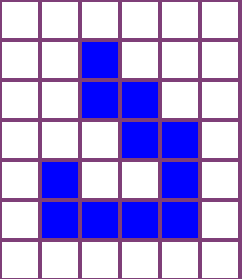}
        \hspace{1em}
        \includegraphics[scale=0.4]{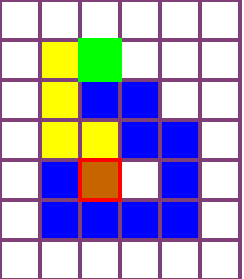}
        \subcaption{Illegal move sequence triggered by convex transition executed as two moves.}
        \label{fig:invalid-sequence}
    \end{minipage}
    \caption{Examples of illegal move sequences found in the implementation of MSDP.\cref{foot:MSlink} In each figure, the source configuration is shown on the left. On the right, the square marked in green is moved to the position marked in red by moving through the positions marked in yellow. The illustrations are screenshots from the online tool.}
    \label{fig:bugs}
\end{figure}

To remove illegal moves, we adapted the movement sequences of MSDP as follows. First of all, we removed all positions which correspond to the intermediate cell occupied in a (valid) convex transition. 
We then try to find valid movement sub-sequences connecting any two consecutive positions. 
However, some positions of the movement sequence given by the MSDP implementation are simply not reachable by a valid sequence of moves, such as the yellow cell in Figure~\ref{fig:bugs}a.
In those cases, we delete these unreachable positions. 
All the remaining positions of the movement sequence can be connected using valid sub-sequences; we find those by computing the shortest sequence of moves between subsequent positions. Note that in all cases we encountered, the end position of a move sequence was always reachable from the start position but the valid path can require $\Omega(n)$ more moves. 


\end{document}